\DeclareMathAlphabet\mathcal{OMS}{cmsy}{m}{n}
\SetMathAlphabet\mathcal{bold}{OMS}{cmsy}{b}{n}
\newtheorem{theorem}{Theorem}
\newtheorem{lemma}[theorem]{Lemma}
\newtheorem{corollary}[theorem]{Corollary}
\newtheorem{Definition}[theorem]{Definition}
\newtheorem{observation}[theorem]{Observation}
\theoremstyle{remark}
\newcommand{\eps}{\varepsilon}
\newcommand{\opt}{\textsc{Opt}\xspace}
\newcommand{\R}{\mathbb{R}}
\newcommand{\normvect}{\vec{n}}
\newcommand{\separate}{\vec{s}}
\newcommand{\dotprod}[2]{\left\langle #1,#2\right\rangle}
\newcommand{\calI}{\mathcal{I}}
\newcommand{\shift}{\rho}
\newcommand{\grid}{\Gamma}
\newcommand{\gridpoints}{G}
\newcommand{\gridpoint}{p}
\newcommand{\scalevar}{\lambda}
\newcommand{\translvar}{\shift}
\newcommand{\val}{c}
\DeclareMathOperator{\para}{polytopes}
\DeclareMathOperator{\conv}{conv}
\DeclareMathOperator{\tsp}{TSP}
\DeclareMathOperator{\vertices}{vertices}
\DeclareMathOperator*{\argmax}{argmax}
\DeclareMathOperator*{\argmin}{argmin}
\definecolor{darkgreen}{rgb}{0,0.392,0}
\newcounter{note}[section] \renewcommand{\thenote}{\thesection.\arabic{note}}
\def\DEBUG{true}
 \newcommand{\kevin}[1]{\refstepcounter{note}\textcolor{blue}{%
     \mathversion{bold}\marginpar{\hfill\tiny\sffamily\bfseries
       \textcolor{blue}{KS~\thenote}}$\ll$\bfseries\sffamily#1
     --Kevin$\gg$}\mathversion{normal}}
 \newcommand{\antonios}[1]{\refstepcounter{note}\textcolor{red}{%
     \mathversion{bold}\marginpar{\hfill\tiny\sffamily\bfseries
       \textcolor{red}{AA~\thenote}}$\ll$\bfseries\sffamily#1
     --Antonios$\gg$}\mathversion{normal}}
  \newcommand{\krzysztof}[1]{\refstepcounter{note}\textcolor{darkgreen}{%
     \mathversion{bold}\marginpar{\hfill\tiny\sffamily\bfseries
       \textcolor{darkgreen}{KF~\thenote}}$\ll$\bfseries\sffamily#1
     --Krzysztof$\gg$}\mathversion{normal}}
 \newcommand{\ruben}[1]{\refstepcounter{note}\textcolor{orange}{%
     \mathversion{bold}\marginpar{\hfill\tiny\sffamily\bfseries
       \textcolor{orange}{RH~\thenote}}$\ll$\bfseries\sffamily#1
     --Ruben$\gg$}\mathversion{normal}}
\newcommand{\kevin}[1]{}
\newcommand{\antonios}[1]{}
\newcommand{\krzysztof}[1]{}
\newcommand{\ruben}[1]{}
\title{A PTAS for Euclidean TSP with Hyperplane Neighborhoods\footnote{A preliminary version of this paper appeared in the Proceedings of the Thirtieth Annual ACM-SIAM Symposium on Discrete Algorithms (SODA 2019).}}
\author[1]{Antonios Antoniadis\thanks{Supported by
    Deutsche Forschungsgemeinschaft (DFG) grant AN 1262/1-1.}} 
\author[2]{Krzysztof Fleszar\thanks{Supported by CONICYT Grant PII 20150140 and ERC consolidator grant TUgbOAT no.\ 772346.}} 
\author[3]{Ruben Hoeksma} 
\author[4]{Kevin Schewior\thanks{Supported by CONICYT Grant PII 20150140 and DAAD PRIME program.}}
\affil[1]{Saarland University and Max-Planck-Institut f\"ur
  Informatik, Saarland University Campus, Saarbr\"ucken, Germany.
  \texttt{aantonia@mpi-inf.mpg.de}}
\affil[2]{University of Warsaw, Warsaw, Poland. \texttt{kfleszar@mimuw.edu.pl}}
\affil[3]{Universit\"at Bremen, Bremen, Germany. \texttt{hoeksma@uni-bremen.de}}
\affil[4]{Technische Universit\"at M\"unchen, M\"unchen, Germany. \texttt{kschewior@gmail.com}}
\begin{document}
\maketitle
\begin{abstract}
  \setcounter{page}{0}
  In the Traveling Salesperson Problem with Neighborhoods (TSPN),
  we are given a collection of geometric regions in some space. The goal
  is to output a tour of minimum length that visits at least one point
  in each region. Even in the Euclidean plane, TSPN is known to be APX-hard
~\cite{TSPNpointpairAPX}, 
  which gives rise to studying more tractable special cases of the problem.
In this paper, we focus on the fundamental special case of regions that are hyperplanes in the~$d$-dimensional Euclidean space. This case contrasts the much-better understood case of so-called fat regions~\cite{TSPNdoublingMetrics, sodaFatRegions} .
    
  While for~${d=2}$ an exact
  algorithm with a running time of~${O(n^5)}$ is
  known~\cite{TSPNlinesPlaneJonsson}, settling the exact approximability of the
  problem for~${d=3}$ has been repeatedly posed as an open
  question~\cite{TSPNplaneJoA, TSPNlinesBallsPlanes, 
sodaFatRegions, handbookgeom}. To date, only an
  approximation algorithm with guarantee exponential in~$d$ is
  known~\cite{TSPNlinesBallsPlanes}, and NP-hardness remains~open.
  
  For arbitrary fixed~$d$, we develop a Polynomial Time Approximation
  Scheme (PTAS) that works for both the tour and path version of the
  problem. Our algorithm is based on approximating the convex hull of
  the optimal tour by a convex polytope of bounded complexity. 
  After enumerating a number of structural properties
 of these polytopes, a linear program 
  finds one of them that minimizes the length of the tour.
  As the approximation guarantee approaches~$1$, 
  our scheme adjusts the complexity of the considered polytopes accordingly.
  
  In the analysis of our approximation scheme, we show that our search space includes a sufficiently
  good approximation of the optimum. To do so, we develop a novel and general
  sparsification technique that transforms an arbitrary convex polytope
  into one with a constant number of vertices, and, subsequently, into one
  of bounded complexity in the above sense. 
  We show that this transformation 
  does not increase the tour length by too much, while the transformed tour visits
  any hyperplane that it visited before the~transformation.
  
\end{abstract}

\thispagestyle{empty}
\pagebreak
\section{Introduction}

The Traveling Salesperson Problem (TSP) is commonly regarded as one of the most
important problems in combinatorial optimization. In TSP, a salesperson wishes
to find a tour that visits a set of clients in the shortest way possible. It is
very natural to consider metric TSP, that is, to assume that the clients are
located in a metric space. While Christofides' famous
algorithm~\cite{Christofides} then attains an approximation factor of~${3/2}$, the
problem is APX-hard even under this assumption~\cite{PapadimitriouY93}. This
lower bound and the paramount importance of the problem has motivated the study
of more specialized cases, in particular Euclidean TSP (ETSP), that is, metric
TSP where the metric space is Euclidean. ETSP admits a Polynomial Time
Approximation Scheme (PTAS), a~${(1+\eps)}$-approximation polynomial time
algorithm for any fixed~${\eps>0}$, which we know from the celebrated results of
Arora~\cite{AroraPTAS} and Mitchell~\cite{MitchellPTAS}. These results have
subsequently been improved and generalized~\cite{DBLP:conf/focs/BartalG13, 
DBLP:journals/siamcomp/BartalGK16, RaoSmithPTASimprovements}.

A very natural generalization of metric TSP is motivated by clients that are
not static (as in TSP) but willing to move in order to meet the salesperson. In the
Traveling Salesperson Problem with Neighborhoods (TSPN), first studied by Arkin and
Hassin in the Euclidean setting~\cite{ArkinHassinTSPN}, we are given a set of reasonably represented
(possibly disconnected) regions. The task is to compute a minimum-length tour
that visits these regions, that is, the tour has to contain at least one point
from every region.  In contrast to regular TSP, the problem is already APX-hard
in the Euclidean plane, even for neighborhoods of relatively low
complexity~\cite{TSPNpointpairAPX,TSPNsegmentAPX}. Whereas the problem did
receive considerable attention and a common focus was identifying natural
conditions on the input that admit a PTAS, the answers that were found are
arguably not yet satisfactory. For instance, it is not known whether the special
case of disjoint connected neighborhoods in the plane is
APX-hard~\cite{sodaFatRegions,handbookgeom}. On the other hand, there has been a
line of work~\cite{CorridorConnectionProblem, Chan2011, TSPNdoublingMetrics, 
TSPNplaneJoA, sodaFatRegions} that has led up to a PTAS for ``fat" regions in
the plane~\cite{sodaFatRegions} and a restricted version of such regions
(``weakly disjoint") in general doubling metrics~\cite{TSPNdoublingMetrics}.
Here, a region is called fat if the radii of the largest ball included in the
region and the smallest ball containing the region are within a constant factor
of each other.

In this paper, we focus on the fundamental case in which all regions are
hyperplanes (in Euclidean space of fixed dimension~$d$) and give a PTAS (more
precisely, EPTAS), improving upon a~${2^{\Theta(d)}}$-approximation~\cite{TSPNlinesBallsPlanes}. Not only is the
problem itself considered \enquote{particularly intriguing}~\cite{TSPNplaneJoA}
and has its complexity status been repeatedly posed as an open problem~\cite{TSPNplaneJoA, TSPNlinesBallsPlanes, sodaFatRegions,
	handbookgeom}.
 It also seems plausible that studying this problem, which is
somewhat complementary to the much-better understood case of fat regions, will
add techniques to the toolbox for TSPN that may lead towards understanding which
cases are tractable in an approximation sense. Indeed, our techniques are novel
and quite general: Using a sparsification technique, we show that a certain
class of bounded-complexity polytopes can be parameterized to represent the
optimal solution well enough. To compute a close approximation to that polytope,
we boost the computational power of an LP by enumerating certain crucial
properties of the polytope.

\paragraph{Further Related Work.} 
In contrast to regular TSP, TSPN is already
APX-hard in the Euclidean plane~\cite{TSPNvaryingSizeAPXhardness}. For some
cases in the Euclidean plane, there is even no polynomial-time~${O(1)}$-approximation (unless P~${=}$ NP), for instance, the case where each region
is an arbitrary finite set of points~\cite{TSPNcomplexity} (Group TSP).
The problem remains APX-hard when there are exactly two points in each
region~\cite{TSPNpointpairAPX} or the regions are line segments of similar
lengths~\cite{TSPNsegmentAPX}.

Positive results for TSPN in the Euclidean plane were obtained in the seminal
paper of Arkin and Hassin~\cite{ArkinHassinTSPN}, who gave~${O(1)}$-approximation
algorithms for various cases of bounded neighborhoods, including translates of
convex regions and parallel unit segments. The only known approximation
algorithm for~$n$ general bounded neighborhoods (polygons) in the plane is an~${O(\log n)}$-approximation~\cite{DBLP:conf/compgeom/MataM97}. Partly in more
general metrics,~${O(1)}$-approximation algorithms and approximation schemes were
obtained for other special cases of bounded regions, which are disjoint,
fat,
or of comparable sizes~\cite{TSPNvaryingSizeAPXhardness, CorridorConnectionProblem,
	TSPNdoublingMetrics, TSPNplaneJoA, sodaFatRegions,
	Mitchell2010-pairwisedisjoint-connected}.

We review results for the case of unbounded neighborhoods, such as lines or planes.
For~$n$ lines in the plane, the problem can be solved exactly in~${O(n^5)}$ time
by a reduction to the watchman route problem~\cite{TSPNlinesPlaneJonsson} and
using existing algorithms for the latter problem~\cite{watchman99,
watchman03,watchman08,watchmanCorrigendum}. 
A~${1.28}$-approximation is possible in linear time~\cite{TSPNlinesRays}. 
This result uses that the smallest rectangle enclosing the optimal tour is
already a good approximation. 
By a straightforward reduction from ETSP in the plane, the problem becomes
NP-hard if we consider lines in three dimensions. For the latter case, only a
recent~${O(\log^3 n)}$-approximation algorithm by Dumitrescu and
Tóth~\cite{TSPNlinesBallsPlanes} is known. They tackle the problem by reducing
it to a group Steiner tree instance on a geometric graph while losing a constant factor. Then they apply a known approximation
algorithm for group Steiner tree. If neighborhoods are planes in 3D, or
hyperplanes in higher constant dimensions, then it is even open whether the problem
is NP-hard. Only one known approximation
result has been obtained so far: The linear-time algorithm of Dumitrescu and
Tóth~\cite{TSPNlinesBallsPlanes} finds, for any constant dimension~$d$ and any
constant~${\eps>0}$, a~${(1+\eps)2^{d-1}/\sqrt{d}}$-approximation of the optimal
tour. Their algorithm generalizes the ideas used for the two-dimensional case~\cite{TSPNlinesRays}.
Via a low-dimensional LP, they find a~${(1+\eps)}$-approximation of the smallest
box enclosing the optimal tour. Then they output a Hamiltonian cycle on the
vertices of the box as a solution. They observe that any tour visiting all the
vertices of the box is a feasible solution and that the size of the box is
similar to the length of the optimal tour. This allows them to relate the length
of their solution to the length of the optimal tour. For the three-dimensional case and a sufficiently
small~$\eps$, their algorithm gives a~${2.31}$-approximation.

Observe that all of the above approximation results hold -- with a loss of a factor of~$2$ --
also for the TSP path problem where the goal is to find a shortest path visiting
all regions (with arbitrary start and end point). 
For the case of lines in the plane, there is a~${1.61}$-approximation linear-time
algorithm~\cite{TSPNlinesRays}. 

For improving the results on hyperplane neighborhoods, a repeatedly expressed
belief is the following: If we identify the smallest convex region~$C$
intersecting all hyperplanes, then we can scale it up by a polynomial factor to a region~${C'}$ such that~${C'}$ contains the optimal tour. Interestingly, Dumitrescu and
Tóth~\cite{TSPNlinesBallsPlanes} refute this belief by giving an example where
no~${(1+\eps)}$-approximate tour exists within such a region~${C'}$, for a small
enough constant~${\eps>0}$. This result makes it unlikely that first narrowing
down the search space to a bounded region (such as the box computed in the~${2^{\Theta(d)}}$-approximation by Dumitrescu and
T\'{o}th~\cite{TSPNlinesBallsPlanes}) and then applying local methods is a
viable approach to obtaining a PTAS. Indeed, the technique that we present in
this paper is much more global.

\paragraph{Our Contribution and Techniques.} 
The main result of this paper is a PTAS for TSP with hyperplane neighborhoods in
fixed dimensions. For fixed~$\eps$ and~$d$, our algorithm runs in strongly polynomial linear time 
(that is, the number of arithmetic operations is bounded linearly in the number of hyperplanes 
and the space is bounded by a polynomial in the input length), making it an EPTAS (\emph{efficient} PTAS). This is a significant step towards
settling the complexity status of the problem, which had been posed as an open
problem several times over the past 15 years~\cite{TSPNplaneJoA,
	TSPNlinesBallsPlanes, sodaFatRegions, handbookgeom}. %
\begin{theorem}\label{thm:main}
	For every fixed~${d\in\mathbb{N}}$ and~${\eps>0}$, there is
	a~${(1+\eps)}$-approximation algorithm for TSP with hyperplane neighborhoods
	in~${\mathbb{R}^d}$ that runs in strongly polynomial linear time.
\end{theorem}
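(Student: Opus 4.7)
The plan is to reduce the problem from optimising over tours to optimising over a family of convex polytopes of bounded combinatorial complexity. Two structural observations drive the reduction. First, if $T^*$ is an optimal tour, then its convex hull $C^* := \conv(T^*)$ necessarily intersects every input hyperplane, and the length of $T^*$ is tightly related (up to a factor depending only on $d$) to a natural convex-body measure of $C^*$ such as its perimeter in $\R^2$ or a surface/mean-width analogue in higher dimensions. Second, given any convex polytope $P$ of bounded complexity that intersects every input hyperplane, one can build a feasible tour through the vertices of $P$ whose length is at most $(1+\eps)$ times that same measure of $P$. Consequently, it suffices to find a polytope of bounded complexity that intersects all input hyperplanes and whose relevant measure is at most $(1+O(\eps))\cdot\opt$.

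The first technical step is a sparsification result: there is a polytope $\widetilde{P}$ with at most $f(d,\eps)$ vertices such that (i) $\widetilde{P}$ intersects every input hyperplane and (ii) $\widetilde{P}$ has measure at most $(1+\eps)$ times that of $C^*$. My approach is to simplify $C^*$ iteratively: as long as the current polytope has too many vertices, identify one whose local contribution to hyperplane incidence is small, delete it, and if this ``uncovers'' some hyperplane, compensate by a slight expansion of a nearby face chosen so that the total measure increase can be charged globally. A careful charging/potential argument should bound the cumulative inflation by $\eps$. A second transformation then pushes the $f(d,\eps)$-vertex polytope into the restricted family used by the algorithm---e.g.\ facets with normals drawn from a suitable $\eps$-net on the unit sphere---incurring only another $(1+\eps)$ factor.

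Second, the algorithm itself enumerates the combinatorial \emph{type} of a candidate polytope of complexity $f(d,\eps)$ and solves an LP for each type. The type fixes data such as the discretised outer-normal directions of facets, the incidence pattern between facets and vertices, and a mapping that assigns each input hyperplane to a vertex or facet at which the polytope should meet it. Given a type, the remaining degrees of freedom---the coordinates of the vertices, subject to the facet-normal, vertex--facet incidence, and hyperplane-intersection constraints---form a linear program of size polynomial in $n$ and $f(d,\eps)$. Solving the LPs, picking the cheapest feasible polytope, and finally rounding it to a Hamiltonian tour on its $O(f(d,\eps))$ vertices (plus one touch point per hyperplane obtained by orthogonal projection) yields a feasible tour of length $(1+O(\eps))\cdot\opt$; rescaling $\eps$ gives the claimed bound, and the LP solves strongly polynomially since only the dimension blows up with $1/\eps$ and $d$.

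The main obstacle I anticipate is the sparsification step. Classical convex-body approximation results in the spirit of Dudley or Bronshteyn--Ivanov produce polytopes with $O(\eps^{-(d-1)/2})$ vertices that are Hausdorff-close to a given body, but they offer no control over incidences with a prescribed family of hyperplanes, and Hausdorff closeness does not translate to a $(1+\eps)$ bound on the geometric measure that governs the tour length. The novelty therefore lies in combining incidence-preserving modifications with a tight charging scheme that yields a genuine $(1+\eps)$---rather than only a constant-factor---loss, while keeping the number of vertices bounded by a function of $d$ and $\eps$ alone. Once the sparsification lemma is in hand, the enumeration/LP framework and the tour-reconstruction step follow more routinely, though care is needed to verify that all enumerations remain polynomial and that the resulting running time is strongly polynomial.
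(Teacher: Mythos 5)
Your high-level architecture---replace the tour by its convex hull, sparsify that hull to a polytope of complexity $f(d,\eps)$, restrict to a family with discretized facet normals, enumerate combinatorial types, and solve one LP per type---matches the paper's. However, two steps that you leave as sketches are exactly where the difficulty lives, and as written both have genuine gaps.

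First, the objective. You propose to let the LP minimize a ``convex-body measure'' (perimeter/mean-width analogue) of the candidate polytope, while conceding that this measure relates to the tour length only ``up to a factor depending only on $d$.'' That concession is fatal for a PTAS: in $d\ge 3$ there is no standard measure of a convex body that equals the shortest tour of its vertices up to a $1+\eps$ factor (unlike $d=2$, where the optimal tour of points in convex position is the hull boundary), so minimizing such a proxy yields only an $O_d(1)$-approximation. The paper avoids any proxy: it additionally enumerates the \emph{order} $\sigma$ in which the $O_{\eps,d}(1)$ vertices are visited and, for each tour edge, a quantized direction vector $\alpha$; angle constraints force the edge to respect the guessed direction, after which the $\ell_2$-length of each edge becomes a linear function of the vertex coordinates accurate to within $1+\eps$ (Lemma~\ref{lem:objective}). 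Without some device of this kind your LP cannot see the actual tour length. (Relatedly, your ``one touch point per hyperplane by orthogonal projection'' is unnecessary and would add uncontrolled length: Lemma~\ref{lem:prelim1} shows that any tour of the vertices of a polytope intersecting all hyperplanes is already feasible.)

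Second, the sparsification. Your iterative scheme---delete a vertex, compensate by expanding a nearby face, and hope a ``careful charging/potential argument'' bounds the cumulative inflation by $\eps$---is precisely the main obstacle you identify, and it is not established; it is unclear what potential would simultaneously bound the number of surviving vertices by $f(d,\eps)$ (independent of the original vertex count) and the total inflation by $\eps$. The paper's route is different and non-iterative: apply the affine normalization of John/Ball so that $B(c,1)\subseteq P\subseteq B(c,d')$, shoot rays from $c$ along an angular $\eps$-net, take the surface intersection points, and expand them from $c$ by $1+\eps$; a trigonometric argument shows the expanded points' convex hull contains $P$, and each surface point is replaced by at most $d$ true vertices of $P$. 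Note also that the right invariant to preserve is \emph{containment} $P\subseteq\conv(\mathcal{P}')$---which automatically preserves intersection with every input hyperplane and, by the intercept theorem, lets one scale the optimal tour to a tour of the new vertices of length $(1+\eps)|\opt|$---so your stated worry about ``controlling incidences with the prescribed hyperplanes'' dissolves once containment via expansion is the goal.
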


Our technique is based on the observation that the optimal tour~${T^\star}$ can be
viewed as the shortest tour visiting all the vertices of a certain polytope~${P^\star}$, the convex hull of~${T^\star}$. So, in order to approximate the optimal
tour, one may also think about finding a convex polytope with a short and
feasible tour on its vertices that intersects all hyperplanes. In this light, the~${({(1+\eps)2^{d-1}/\sqrt{d}})}$-approximation by Dumitrescu and T\'oth~\cite{TSPNlinesBallsPlanes},
which, by using an LP, finds a cuboid with minimum perimeter intersecting all input
hyperplanes, can be viewed as a very crude approximation of~${P^\star}$. Note that
forcing the polytope to intersect all hyperplanes makes each tour on its
vertices feasible.

The approach we take here can be viewed as an extension of this idea. Namely, we also
use an LP with~${O_{\eps,d}(1)}$ (that is,~${O(1)}$ when~$\eps$ and~$d$ are constants) many variables
to find bounded-complexity polytopes intersecting all input
hyperplanes. However, the extension to a~${(1+\eps)}$-approximation raises three main challenges:
\begin{compactenum}
	\item In order to get a~${(1+\eps)}$-approximation, the complexity %
	of the polytope increases to an arbitrarily high level as~${\eps\rightarrow 0}$.
	We need to come up with a suitable definition of complexity. 
	\item More careful
	arguments are necessary for comparing the optimum with the shortest tour among
	the feasible ones that visit all vertices of a polytope of bounded complexity.
	\item As the complexity of the polytope increases, we need to handle more and
	more complicated combinatorics, which makes writing an LP more difficult.
\end{compactenum}
 
In this paper, we overcome all three challenges. On the way, we introduce
several novel ideas, many of which can be of independent interest, and combine
known ones.

First, we define polytopes of bounded complexity to be those that can be
obtained in the following way. From the integer grid~${\{0,\dots,g\}^d}$ (for some
suitably chosen~${g\in O_{\eps,d}(1)}$), select a subset of points and take the
convex hull of them. The result is called a \emph{base polytope}. Subsequently
translate and uniformly scale the base polytope arbitrarily to obtain the final
polytope.

The second challenge is overcome by turning~${P^\star}$, the convex hull of an
optimum tour~${T^\star}$, into one of the polytopes of bounded complexity without
increasing the length of the shortest tour on the vertices by more than a~${(1+\eps)}$-factor. The most straightforward way of getting such a polytope that is
``similar" to~${P^\star}$ is the following: Take the smallest axis-aligned
hypercube that includes~${P^\star}$ and subdivide it uniformly by an appropriate translated and scaled copy of our integer grid~ $\{0,\dots,g\}^d$.
Now, for each vertex~${v\in P^\star}$, take all
vertices of the grid cell containing~$v$, and take the convex hull of all these
points to obtain~$P$. Clearly,~$P$ is of bounded complexity as defined above.

However, in order to satisfactorily bound the length of the shortest tour on the
vertices of~$P$ with  
respect to~${|T^\star|}$, we need~${P^\star}$ to have
only 
few vertices. For instance, if~${P^\star}$ had~${k=O_{\eps,d}(1)}$ vertices, 
we could choose the granularity of the grid to be small enough so that we could
transform~${T^\star}$ into a tour of the vertices of~$P$ by making it longer by
only the additive length of~${\eps\cdot|T^\star|/k}$ at each vertex. Since in
general we cannot bound the number of vertices of~${P^\star}$, we first transform~${P^\star}$ into an intermediate polytope~${P'}$ that has~${O_{\eps,d}(1)}$ vertices,
and only then do we apply the above construction to obtain~$P$. This is where the
following structural result, which is likely to have more general applications,
is used.

For a general polytope, we show how core sets~\cite{coreset-survey,Chan06} can
be used to select~${O_{\eps,d}(1)}$ many of its vertices 
such that if we scale the convex hull of these selected vertices by the
factor~${1+\eps}$, from some carefully chosen center, the scaled convex hull
contains the original polytope. 
This result comes in handy, because we can scale~${T^\star}$ in the same way to
obtain a tour of the vertices of~${P'}$ of length~${(1+\eps)\cdot|T^\star|}$. The
proof utilizes properties of the maximum inscribed hyper-ellipsoid, due to
John~\cite{John1948} (see also the refinement due to Ball~\cite{Ball1992} that
we use in this paper). 

The third challenge is to find the tour constructed
above by using linear programing. The idea is the following:
We enumerate all base polytopes. For each base polytope, we write an LP that
finds the shortest feasible tour on the vertices of a polytope obtained from the
base polytope by uniform scaling and translating: The LP has~${d+1}$ variables:~$d$ variables for translating the base polytope and another one for scaling; an
assignment of these variables naturally corresponds to a polytope. The objective
function is the length of the shortest tour on the vertices of that polytope,
which can be computed by multiplying the value of the scaling variable with the
(pre-computed) length of the shortest tour on the vertices of the base polytope.
To force this tour to be feasible, we use an idea by Dumitrescu and
T\'oth~\cite{TSPNlinesBallsPlanes}: By convexity of the polytope, for each input
hyperplane, we can identify two vertices, the \emph{separated pair}, that are on
different sides of the hyperplane if and only if the polytope intersects the
input hyperplane. We write constraints that make sure that this is the case.

We note that our techniques easily extend to the path variant of the problem and several variants with prespecified parameters of the tour. We discuss these in Section~\ref{sec:discussion}.

\paragraph{Overview of this Paper.} %
In Section~\ref{sec:prelim}, we introduce some notation that we use throughout
the paper and make some preliminary observations. Then, in
Section~\ref{sec:struct}, we show that the shortest TSP tour that satisfies
certain conditions is a~${(1+\eps)}$-approximation of the overall shortest
TSP tour. In Section~\ref{sec:algo}, we describe an algorithm that computes a~${(1+\eps)}$-approximation of the shortest TSP tour that satisfies these
conditions.
Finally, in Section~\ref{sec:discussion}, we discuss remaining open
problems and the implication of our work for the TSPN path problem with
hyperplanes and some extensions.

\section{Preliminaries}
\label{sec:prelim}
\paragraph{Problem Definition.}
Throughout this paper, we fix a dimension~$d$ and restrict ourselves
to the Euclidean space~${\mathbb{R}^d}$. The input of TSPN for
hyperplanes consists of a set~$\mathcal{I}$ of~$n$
hyperplanes. 
Every hyperplane is given by~$d$ integers~${(a_i)_{i=1}^d}$,
where not all scalars are~$0$, and an integer~$c$, and it contains all
points~$x$ 
that satisfy~${a_1x_1+\dots + a_nx_n = c}$. A \emph{tour} is a closed
polyline 
and is called \emph{feasible} or \emph{a feasible solution} if it
visits every hyperplane of~$\mathcal{I}$, that is, if it contains a
point in every hyperplane of~$\mathcal{I}$.
A tour is \emph{optimal} or \emph{an optimal solution} if it is a
feasible tour of minimum length. 
 The goal is to find an optimal tour. 
 Given~$\mathcal{I}$, we call any such optimal tour~$\opt$ or~${\opt(\mathcal{I})}$. 
The length of a tour~$T$ is given by~${|T|}$.

\paragraph{Computational Model.} We use a slight extension of the real
RAM as the 
computational model. In addition to the standard arithmetic operations, we assume that the operation of computing integer square roots takes constant time.
This does not significantly increase our computational power as an integer square root can be computed on a Turing machine in essentially the same time as multiplying two numbers of the same magnitude~\cite{Brent1976}.

\paragraph{Notation.}
For any positive integer~$g$, we define the integer grid~$\grid_g$ as
the subset~${\{0,\dots,g\}^d}$ of~${\mathbb{N}^d}$ (that is, the subset
of~${\mathbb{N}^d}$ containing all points whose coordinate values are 
smaller or equal to some integer~$g$). The \emph{size} of the grid is~$g$.
We say that a convex polytope is a \emph{base polytope of\ \ $\grid_g$} if 
its vertices are grid points of~$\grid_g$. 
By~${\para(\grid_g{})}$, we denote the set of all polytopes that can be constructed 
by \emph{scaling and translating a base polytope of~$\grid_g$}.

For a polyhedron~$P$,~${\vertices(P)}$ denotes the set of its vertices. A
\emph{tour of a point set~$\mathcal{P}$} is a closed polyline that contains
every point of~$\mathcal{P}$. 
Throughout this paper, let~${\tsp(\mathcal{P})}$ denote any shortest tour
of~$\mathcal{P}$, and~${\conv(\mathcal{P})}$ denote the convex hull
of~$\mathcal{P}$. For a tour~$T$,~${\conv(T)}$ denotes the convex hull of the
tour. We also use~${\tsp(P)}$ for a polyhedron~$P$ in order to refer to
the \emph{tour of a polyhedron}, that is,
any
shortest tour of~${\vertices(P)}$.
A~\emph{scaling} of a point set~$\mathcal{P}$
\emph{from} a point~$c$ \emph{with scaling factor~$\alpha$} is the set of
points~${\{c+\alpha(p-c)\mid p\in\mathcal{P}\}}$. %
A \emph{fully-dimensional polytope} is a bounded and fully-dimensional
polyhedron, that is, a polyhedron that is bounded and contains a~$d$-dimensional
ball of strictly positive radius.
Unless otherwise specified, we use hyperplane, 
hypercube, etc.\ to refer to the corresponding objects in the~$d$-dimensional space.

\paragraph{Preliminary Observations.} 

Let~$T$ be an optimal tour for~$\mathcal{I}$ and suppose that we know the polytope~${\conv(T)}$. By the following lemma it suffices to find an optimal tour for the vertices of~${\conv(T)}$. 
\begin{lemma}\label{lem:prelim1}
 Let~$P$ be any convex polytope.
 Every tour of~$P$ is a feasible solution to~$\mathcal{I}$ if and
 only if~$P$ intersects every hyperplane of~$\mathcal{I}$. 
\end{lemma}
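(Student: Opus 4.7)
The plan is to prove both directions separately, using convexity to handle the forward direction and the intermediate value theorem for the backward one.

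For the forward direction (\emph{only if}), I will pick an arbitrary hyperplane $h\in\mathcal{I}$ and show that $P\cap h\neq\emptyset$. Since~$P$ is the convex hull of its vertices, every line segment between two vertices lies in~$P$, and therefore any tour~$T$ of~$\vertices(P)$ is contained in~$P$. By hypothesis, $T$ is feasible, so $T\cap h\neq\emptyset$, which immediately gives $P\cap h\supseteq T\cap h\neq\emptyset$.

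For the backward direction (\emph{if}), fix an arbitrary tour~$T$ of~$\vertices(P)$ and an arbitrary $h\in\mathcal{I}$; I need to show $T\cap h\neq\emptyset$. By assumption there is a point $p\in P\cap h$, and since $P=\conv(\vertices(P))$, we can write $p$ as a convex combination of vertices of~$P$. Now I distinguish two cases. If some vertex $v\in\vertices(P)$ lies on~$h$, then~$T$ visits~$v$ and we are done. Otherwise, parametrize $h$ by the linear functional $\ell$ vanishing on~$h$. Since $\ell(p)=0$ and no vertex satisfies $\ell(v)=0$, the convex-combination representation of~$p$ must involve both a vertex $v^+$ with $\ell(v^+)>0$ and a vertex $v^-$ with $\ell(v^-)<0$ (otherwise the combination would stay strictly on one side). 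Since $T$ is a closed polyline visiting all vertices of~$P$, and $\ell$ is continuous along~$T$, the intermediate value theorem produces a point on~$T$ where $\ell$ vanishes, i.e., a point of $T\cap h$.

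The main (minor) obstacle is the second case of the backward direction: one has to rule out that $P$ could intersect $h$ without the vertices of $P$ straddling $h$. This is where the identity $P=\conv(\vertices(P))$ and the linearity of the defining functional of $h$ are essential; everything else is a direct application of convexity and continuity.
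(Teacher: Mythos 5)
Your proof is correct and takes essentially the same approach as the paper: the only-if direction uses that every tour lies inside the convex polytope, and the if direction splits into the case where a vertex lies on the hyperplane and the case where the hyperplane strictly separates a pair of vertices, concluding via continuity (the intermediate value theorem). Your convex-combination argument just makes the ``separates a pair of vertices'' step a bit more explicit than the paper's phrasing, but the reasoning is the same.
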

\begin{proof}
	Let~$P$ be a convex polytope that intersects every hyperplane of~$\mathcal{I}$,
        and~$T$ be any tour on the vertices of~$P$.
	Consider any hyperplane~$h$ of~$\mathcal{I}$. If it contains any vertex of~$P$, then it is visited by~$T$. 
	If it does not contain any vertex of~$P$, then it must intersect the interior of~$P$ and separate at least one pair of vertices of~$P$.
	Hence, any path connecting that pair must intersect~$h$, thus,
        the tour~$T$ visits~$h$ in this case as well.

        For the only-if part, assume that there exists a hyperplane~$h$ such that~$P$ does not intersect~$h$. Since by convexity any tour on
        the vertices of~$P$ is contained in~$P$, no such tour can
        visit~$h$.
 \end{proof}

\begin{corollary}\label{cor:prelim1}
Any optimal tour 
of~${\conv(\opt(\mathcal{I}))}$ is also
an optimal tour of~$\mathcal{I}$. 
\end{corollary}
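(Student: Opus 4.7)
The plan is to combine Lemma~\ref{lem:prelim1} with the standard convex-hull fact that every vertex of $\conv(S)$ must lie in $S$. Write $P^\star := \conv(\opt(\mathcal{I}))$, and let $T$ be any optimal tour of $\vertices(P^\star)$. I would argue separately that $T$ is feasible for $\mathcal{I}$ and that its length is at most that of $\opt(\mathcal{I})$.

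For feasibility: since $\opt(\mathcal{I})$ is a feasible tour, it contains a point on every hyperplane of $\mathcal{I}$; as $\opt(\mathcal{I})\subseteq P^\star$, the polytope $P^\star$ intersects every hyperplane of $\mathcal{I}$. Applying Lemma~\ref{lem:prelim1} to $P^\star$ yields that every tour of $\vertices(P^\star)$ is feasible, and in particular so is $T$.

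For optimality, the key observation is that $\opt(\mathcal{I})$ itself can be regarded as a tour of $\vertices(P^\star)$. Indeed, by a standard property of convex hulls, every vertex of $\conv(S)$ must lie in $S$; applied to the point set $S$ consisting of the points traced by the polyline $\opt(\mathcal{I})$, this means that each vertex of $P^\star$ is visited by $\opt(\mathcal{I})$. Hence $\opt(\mathcal{I})$ is one specific tour of $\vertices(P^\star)$, so the minimum-length such tour $T$ satisfies $|T|\le |\opt(\mathcal{I})|$. Combined with feasibility, $T$ is an optimal tour for $\mathcal{I}$.

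There is no real obstacle here: the statement is a direct corollary of Lemma~\ref{lem:prelim1} once one notes that every vertex of $\conv(\opt(\mathcal{I}))$ lies on the polyline $\opt(\mathcal{I})$. The only point that warrants a single line of justification is this last property; everything else is immediate.
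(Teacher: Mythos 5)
Your proof is correct and matches the argument the paper intends (the corollary is stated there without an explicit proof, as an immediate consequence of Lemma~\ref{lem:prelim1}): feasibility of $T$ follows from Lemma~\ref{lem:prelim1} since $\conv(\opt(\mathcal{I}))$ intersects every hyperplane, and $|T|\le|\opt(\mathcal{I})|$ because every vertex of $\conv(\opt(\mathcal{I}))$ lies on the polyline $\opt(\mathcal{I})$, making $\opt(\mathcal{I})$ itself a tour of those vertices. You correctly identify and justify the one non-trivial ingredient, namely that the vertices of the convex hull of a point set belong to the set.
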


Since we identify tours with polytopes and vice versa, we say that a polytope is \emph{feasible} if it intersects 
all input hyperplanes, and we say that a polytope is \emph{optimal} if it is the convex hull of an optimal tour.

\section{Structural Results}
\label{sec:struct}

The goal of this section is to prove that it suffices to focus our
attention only on solutions that visit the vertices of the polytopes
in~${\para(\grid_g)}$, where the specific size~${g=O_{\eps,d}(1)}$ will be
specified later.
Our proof boils down to showing that an arbitrary optimal tour can be
transformed into one that visits only the vertices of such a polytope~${P\in\para(\grid_g)}$, while not
increasing the length of the tour by more than a~${(1+\epsilon)}$-factor. This bound then will imply that the algorithm
presented in Section~\ref{sec:algo} is indeed a PTAS.
More specifically, the goal of this section is to show the following lemma.

\begin{lemma}\label{lem:finite-grid-enough}
  For any fixed~${\eps>0}$, there exists a grid~$\grid_g$ of size~${g=O_{\epsilon,d}(1)}$ such that
  for any input set~$\mathcal{I}$ of hyperplanes
  there is a polytope~${P\in\para(\grid_g)}$ with~${T=\tsp(P)}$ 
  being feasible and 
\begin{align*}
  |T|\leq(1+\eps)\cdot\opt(\mathcal{I})\,. 
\end{align*}
\end{lemma}

In other words, Lemma~\ref{lem:finite-grid-enough} shows that
in order to obtain a~${(1+\eps)}$-approximate solution it suffices to
find a minimum-tour-length feasible polytope, among the  polytopes
in~${\para(\grid_g)}$, for an apropriate~$g$.

\subsection{Transformation of the Optimal Polytope to a Polytope of
  Bounded Complexity}
\label{sec:parallel-polytope}

In this subsection, we prove Lemma~\ref{lem:finite-grid-enough} using
the following theorem, which may be of independent interest. 
We prove the theorem in the next subsection.

\begin{theorem}\label{thm:finite-vertices-enough}
  Let~${\eps>0}$. There is a number~${k_{\eps,d}\in O_{\eps,d}(1)}$ such
  that, for any 
  convex polytope~$P$ in~$\mathbb{R}^d$, 
  there exists a (center) point~$c$ in~$P$ and a set~$\mathcal{P}$ of at
  most~${k_{\eps,d}}$ vertices of~$P$ with the following property: 
  The polytope~$P$ is a subset of~${\conv(\mathcal{P'})}$ 
  where~${\mathcal{P'}}$ is obtained by scaling~$\mathcal{P}$ from the center~$c$ with the scaling factor~${1+\eps}$.
\end{theorem}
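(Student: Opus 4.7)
The plan is to reduce to a well-conditioned setting using John's ellipsoid. Let $E$ be the maximum-volume ellipsoid inscribed in $P$, and let $c$ be its center. Applying the affine map that sends $E$ to the unit ball $B$ centered at the origin reduces the problem to the normalized case, because affine maps conjugate the scaling-centered-at-$c$-by-factor-$(1+\eps)$ operation to scaling-centered-at-$0$-by-factor-$(1+\eps)$, so a covering in the transformed coordinates transports back. By John's theorem (with the refinement of Ball that the authors cite for sharper constants) we then have $B \subseteq P \subseteq dB$, and vertices of $P$ map to vertices of the transformed polytope.

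Working now in the normalized coordinates, I would build a $\theta$-dense net on $S^{d-1}$: choose unit vectors $u_1,\dots,u_N$ with $N = O_{\eps,d}(1)$ such that every unit vector lies within angle $\theta$ of some $u_i$ (a standard spherical covering whose cardinality matches the form of $k_{\eps,d}$). For each $u_i$, pick $v_i \in \argmax_{v \in \vertices(P)}\langle v, u_i\rangle$, and set $\mathcal{P} = \{v_1,\dots,v_N\}$. To prove that $P \subseteq \conv(\mathcal{P}')$, where $\mathcal{P}'$ is the expansion of $\mathcal{P}$ from the origin by factor $1+\eps$, I would argue via support functions: it suffices to show $h_P(u) \leq (1+\eps)\,h_{\conv(\mathcal{P})}(u)$ for every unit $u$. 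Given $u$, fix $u_i$ within angle $\theta$ of $u$ and let $v^\star$ be the maximizer of $\langle\cdot,u\rangle$ over $\vertices(P)$. Two Cauchy--Schwarz applications, together with $|v_i|,|v^\star|\leq d$, give
\[
h_{\conv(\mathcal{P})}(u) \;\geq\; \langle v_i,u\rangle \;\geq\; \langle v^\star,u\rangle - 2d\,|u-u_i| \;=\; h_P(u) - 2d\,|u-u_i|,
\]
and since $B \subseteq P$ forces $h_P(u)\geq 1$, choosing $\theta$ (equivalently $|u-u_i|$) of order $\eps/d$ yields $h_{\conv(\mathcal{P})}(u) \geq h_P(u)/(1+\eps)$, as required.

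The main obstacle is matching the precise bound $k_{\eps,d}$ from Definition~\ref{def:baseset}: the expression $\arctan(\eps/\sqrt{d^2-1})$ appearing there suggests replacing the crude Cauchy--Schwarz step by a direct right-triangle argument in the two-dimensional plane spanned by $u$ and $u_i$, exploiting the fact that the inscribed ball gives a ``short'' radius $\geq 1$ and the circumscribed ball a ``long'' radius $\leq d$, so the tolerable angular slack scales like $\eps/\sqrt{d^2-1}$ rather than the weaker $\eps/d$ that the support-function calculation above provides. Executing this refined geometric estimate, and verifying that the John/Ball affine reduction is compatible with ``expansion centered at $c$'' so that the same factor $1+\eps$ suffices in the original coordinates, is where I expect the main work to lie; the rest (counting vertices via the covering number and the support-function conclusion) is essentially routine.
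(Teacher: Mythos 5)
Your proposal is correct, and while it shares the paper's top-level skeleton (normalize via the John/Ball ellipsoid so that $B(c,1)\subseteq P\subseteq B(c,d)$, then discretize directions into an $O_{\eps,d}(1)$-size angular net), the core sparsification step is executed by a genuinely different argument. The paper shoots rays from the center, intersects each ray with $\surf{P}$, proves containment of $P$ in the expanded hull via a planar trigonometric estimate ($\tan\theta'/\tan\phi\le\eps$, using the inscribed ball to lower-bound the angle $\phi$), and only at the very end converts the selected surface points into actual vertices of $P$ by Carath\'eodory, paying an extra factor $d$ in the count. You instead select, for each net direction $u_i$, a vertex maximizing $\langle\cdot,u_i\rangle$ and verify $h_P(u)\le(1+\eps)h_{\conv(\mathcal{P})}(u)$ by two Cauchy--Schwarz steps together with $h_P(u)\ge 1$ (from $B\subseteq P$) and $|v|\le d$ (from $P\subseteq dB$). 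This is a clean dual/support-function version of the same idea: it lands directly on vertices of $P$ (so property (i) of the selection is automatic and no Carath\'eodory blow-up is needed), it implicitly also shows $\vec{0}\in\conv(\mathcal{P})$ so the scaled support function is well-behaved, and the conjugation of centered scalings by the affine normalization that you flag is indeed immediate since affine maps commute with homotheties about a point they fix. The only loose ends are cosmetic: the theorem admits lower-dimensional polytopes, so the net and the Ball bound should be taken in the affine hull of $P$ (radius $d'\le d$), exactly as the paper does; and the precise constant $k_{\eps,d}$ of Definition~\ref{def:baseset} need not be matched, since the theorem only asserts $k_{\eps,d}\in O_{\eps,d}(1)$ and your covering number $O((d/\eps)^{d-1})$ is of the same order as the paper's.
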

Theorem~\ref{thm:finite-vertices-enough} 
implies that, by increasing
the size of the convex polytope by at most a~${(1+\eps)}$ factor, we can
focus our attention on a convex polytope with few vertices.
Working with a polytope with few vertices 
is crucial, since we increase the length of the 
tour by a small amount around each vertex. In order to keep this 
from accumulating to a too large increase, we need the number of vertices to 
be~${O_{\eps,d}(1)}$ and independent of the grid size so that we can
later pick the granularity of the grid small enough.

We also use the following observation
to prove Lemma~\ref{lem:finite-grid-enough}.

\begin{observation}\label{obs:tours} 
 Let~$P$ be any convex polytope and let~$\mathcal{P}$ be any subset of its vertices. 
   If we scale~$\mathcal{P}$ from any center~$c$ by a scaling factor~${\alpha>1}$, then the following holds for the resulting point set~${\mathcal{P'}}$: 
	\[
		|\tsp(\conv(\mathcal{P'}))| \le \alpha |\tsp(P)|\,.
	\]
\end{observation}

\begin{proof}
	Consider~${\tsp(P)}$ and shortcut it to a tour~$T$ that visits
	only the vertices in~$\mathcal{P}$. 
	Scaling~$T$ by the factor~$\alpha$ 
	from the center~$c$ gives us a
	tour~${T'}$ that visits all vertices in~${\mathcal{P'}}$. By the intercept theorem, we
	have~${|T'|=\alpha|T|\le \alpha|\tsp(P)|}$. 
\end{proof}

The main idea in the proof of Lemma~\ref{lem:finite-grid-enough} is based on
transforming the polytope~$P$ of an optimal tour~$\opt$ into a polytope~${P'}$
in~${\para(\grid_g)}$.  First, we sparsify~$P$ and scale it up by the
factor~${1+\eps}$, as described in Theorem~\ref{thm:finite-vertices-enough}
and then we 
\enquote{snap} it to a~$d$-dimensional grid~${\grid_g'}$ to get~${P'}$,
where~${\grid_g'}$ can be obtained from~$\grid_g$ via translating
and scaling. This will directly imply that~${P'\in\para(\grid_g)}$.
We finish the proof by additionally showing
that~$P$ is contained in~${P'}$ (which implies the feasibility of~${P'}$), and that the length of~${\tsp(P')}$ is not much bigger
than the length of~$\opt$.

\begin{proof}[Proof of Lemma~\ref{lem:finite-grid-enough}]
  	Fix an~${\eps'>0}$ such that it fulfills~${1+\eps\ge (1+\eps')^2}$. 
  	Let
  	\begin{align*}
  	g = \left\lceil\frac{k_{\eps',d}\cdot\sqrt{d}(2^d+1)}{\eps'}\right\rceil
  	\end{align*} be the size of the grid~$\grid_g$  	(note that~${g=O_{\eps,d}(1)}$). 
  	Fix an arbitrary
	input instance~$\mathcal{I}$ and 
	let~$\opt$ be an optimal tour
	for~$\mathcal{I}$ with length~${|\opt|}$. 
	Let~${P = \conv(\opt)}$; we call it the \emph{optimal polytope}. By
	Corollary~\ref{cor:prelim1},~${\tsp(P)}$ is an optimal solution
	for~$\mathcal{I}$. 
	We apply Theorem~\ref{thm:finite-vertices-enough} on~${\eps'}$ and~$P$ and
	obtain the scaled point set~${\mathcal{P'}}$.
	Let~${P'= \conv(\mathcal{P}')}$. 
	By Theorem~\ref{thm:finite-vertices-enough},~${P\subseteq P'}$ and, hence,~${P'}$
	intersects every hyperplane in~$\mathcal{I}$. Thus, by
	Lemma~\ref{lem:prelim1},~${\tsp(P')}$ is a feasible tour of~$\mathcal{I}$ and, by
	Observation~\ref{obs:tours}, the tour is not too expensive. However, this does
	not prove Lemma~\ref{lem:finite-grid-enough} yet, as~${P'}$ is not necessarily
	contained in~${\para(\grid_g)}$. 
	In order to achieve this, we start by defining a
        grid~${\grid_g'}$ such that it can be obtained by scaling and
        translating~$\grid_g$. 
	It will help us to transform~${P'}$ to another polytope~${P''}$ that has some desirable
	properties.

	Consider the smallest possible axis-aligned bounding hypercube of~${P'}$. Let~${D'}$
	be its edge length. 
The grid~${\grid_g'}$ is now obtained by applying to the hypercube an
axis-aligned~$d$-dimensional grid of granularity (grid-cell side length)
	\begin{align*}
	    D'/g \le  D' \cdot \frac{\eps'}{k_{\eps',d}\cdot \sqrt{d}(2^d+1)}. 
	\end{align*}
	Again, note that~${\grid_g'}$ can be obtained by translating and
        scaling 
	 the grid~$\grid_g$ (by the factor~${D'/g}$.
         Hence, by this fact and the definition of~${\para(\grid_g)}$, 
	any fully-dimensional polytope that has its vertices at grid-points
	of~${\grid_g'}$ is also in~${\para(\grid_g)}$.
 
Thus,
	it suffices to transform the polytope~${P'}$ into a polytope~${P''}$ such that
  	\begin{enumerate}[label=(\roman*)]
	  	\item $P''$ has its vertices at grid-points of~${\grid_g'}$, \label{item:grid-points}
	  	\item $P''\supseteq   P'$, and \label{item:supset}
	  	\item $|\tsp(P'')|\le (1+\eps')|\tsp(P')|$. \label{item:cost} 
 	\end{enumerate}

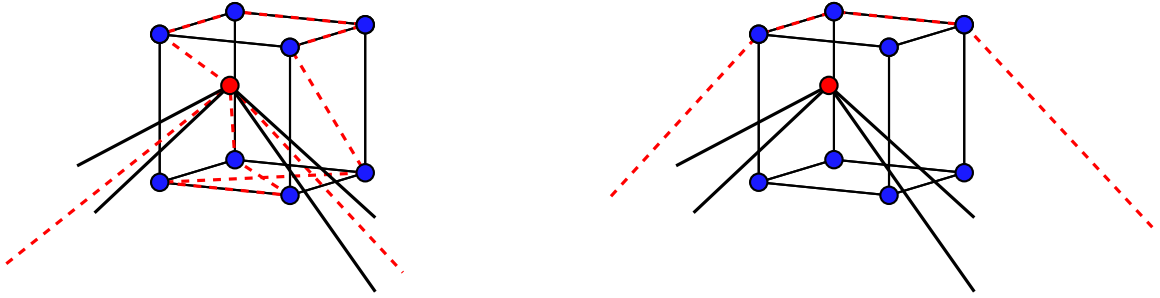
\begin{figure*}
	\centering
	\begin{subfigure}[t]{0.46\textwidth}
		\centering
		\begin{tikzpicture}[scale=1,
		tdplot_main_coords,axis/.style={->},thick]  
		
		\draw[color=red,dashed, very thick](1,0.5,1.2) -- (-1, 2, -1.5);
		
		\foreach \x in {0,2}
		\foreach \y in {0,2}
		\foreach \z in {0,2}
		{
			\draw[thick,opacity=1] (\x,0,\z) -- (\x,2,\z);
			\draw[thick,opacity=1] (0,\y,\z) -- (2,\y,\z);
			\draw[thick,opacity=1] (\x,\y,0) -- (\x,\y,2);
		}
		
		\draw[color=red, dashed, very thick] (4, -1.2, -0.9) -- (1,0.5,1.2) -- (2,0,2)
		-- (0, 0, 2) -- (0, 2, 2) -- (2, 2, 2) -- (0,2,0) -- (2,0,0) -- (2,2,0) --
		(0,0,0)  -- (1,0.5,1.2);

		\foreach \x in {0,2}
		\foreach \y in {0,2}
		\foreach \z in {2}
		{\draw[fill=blue!90] (\x,\y,\z) circle (0.3em);}    
		
		\foreach \x in {0,2}
		\foreach \y in {0,2}
		\foreach \z in {0,2}
		{\draw[fill=blue!90] (\x,\y,\z) circle (0.3em);}

		\draw[very thick] (-2,1,-1) -- (1,0.5, 1.2) -- (-1, -3, -0.5);
		\draw[very thick] (2,-1,-0.5) -- (1,0.5, 1.2) -- (-2, 1, -2);
		\draw[fill=red] (1, 0.5, 1.2) circle (0.3em);
		\end{tikzpicture}
		\caption{The tour gets modified to have a loop for each vertex
			going through all vertices of the
                        hypercube.
                      }
	\end{subfigure}
	\hfill
	\begin{subfigure}[t]{0.45\textwidth}
		\centering
		\begin{tikzpicture}[scale=1,
		tdplot_main_coords,axis/.style={->},thick] 
		
		\foreach \x in {-10,-8}
		\foreach \y in {0,2}
		\foreach \z in {0,2}
		{
			\draw[thick,opacity=1] (\x,0,\z) -- (\x,2,\z);
			\draw[thick,opacity=1] (-10,\y,\z) -- (-8,\y,\z);
			\draw[thick,opacity=1] (\x,\y,0) -- (\x,\y,2);
		}
		
		\draw[color=red, dashed, very thick]  (-11, -4, -1) --  (-8,0,2)
		-- (-10, 0, 2) -- (-10, 2, 2) -- (-15,2,-1.5);

		\foreach \x in {-10,-8}
		\foreach \y in {0,2}
		\foreach \z in {2}
		{\draw[fill=blue!90] (\x,\y,\z) circle (0.3em);}    
		
		\foreach \x in {-8,-10}
		\foreach \y in {0,2}
		\foreach \z in {0,2}
		{\draw[fill=blue!90] (\x,\y,\z) circle (0.3em);}

		\draw[very thick] (-12,1,-1) -- (-9,0.5, 1.2) -- (-11, -3, -0.5);
		\draw[very thick] (-8,-1,-0.5) -- (-9,0.5, 1.2) -- (-12, 1, -2);
		\draw[fill=red] (-9, 0.5, 1.2) circle (0.3em);        
		\end{tikzpicture}
		\caption{The new tour only goes along vertices in the convex
			hull.}
	\end{subfigure}
	\caption{Snapping a polytope to the grid. A 3D-example. The tour is
		the dashed line. After taking the convex hull of the new points, the
		tour can be shortcut: see cube on the right. 
		\label{fig:snapping}
	}
\end{figure*}

The transformation consists of mapping each vertex~${v\in\vertices(P')}$
to a subset of~$2^d$ many vertices of the grid~${\grid_g'}$. More
specifically, we map~$v$ to~${\vertices(\mathcal{C}_v)}$ where~${\mathcal{C}_v}$ is a closed hypercube cell
of~${\grid_g'}$ that contains~$v$. 
 The polytope~${P''}$ is then simply defined
as the convex hull of the grid points to which we mapped~${\vertices(P')}$.

	In order to conclude the proof, we show the three desired properties.
	Property~\ref{item:grid-points} directly follows by construction: Each point in~${\mathcal{P}''}$ is a vertex of the grid~${\grid_g'}$.
	Property~\ref{item:supset} directly follows from the facts that each vertex~$v$ of~${\mathcal{P}'}$
	is within a hypercube cell~${\mathcal{C}_v}$ of the grid~${\grid_g'}$, and that
	each vertex~${v'\in\mathcal{C}_v}$ is also in~${\mathcal{P}''}$.
	In turn,~$v$ is contained in~${\conv(\mathcal{P}'') = P''}$. 
	In order to prove Property~\ref{item:cost}, consider the following tour 
	on~${\vertices(P')\cup\vertices(P'')}$: We start with the tour~${\tsp(P')}$,
	and for each vertex~${v\in\vertices(P')}$ that it visits, we insert a 
	tour starting at~$v$ and visiting consecutively all vertices of~${\vertices(\mathcal{C}_v)}$ before returning to~$v$ and
	continuing with~${\tsp(P')}$. See also
	Figure~\ref{fig:snapping}. 
	This clearly increases the length of~${\tsp(P')}$ by at most~${({D'}/{g})\sqrt{d}(2^d+1)}$ for each visited vertex~${v\in\vertices(P')}$.
	Since~${|\vertices(P')| = k_{\eps',d}}$ by Theorem~\ref{thm:finite-vertices-enough}, 
	the total increase in cost is at 
	most~${({D'}/{g}) k_{\eps',d}\sqrt{d}(2^d+1) \le \eps' D'}$. 
	We can now
	remove~${\vertices(P')}$ from the resulting tour by shortcutting which
	can only decrease the total length. This way we obtain a tour
        for~${P''}$. 
	Therefore, 
	\begin{align*}
		|\tsp(P'')| & \le |\tsp(P')|+\eps' D'\\
		&\le (1+\eps')|\tsp(P')| \\
		&\le  (1+\eps')^2|\tsp(\opt)|\\
		&\le (1+\eps)|\tsp(\opt)|\,.
	\end{align*} 
	The second inequality
	follows because, by the fact that the smallest bounding box has
	edge length~${D'}$, we have~${D'\le |\tsp(P')|}$. The third inequality
	follows by Observation~\ref{obs:tours} when setting~${\alpha=1+\eps'}$.
\end{proof}

\subsection{Reducing the Number of Vertices}
\label{sec:reducing-to-finite-number}

In this subsection, we prove Theorem~\ref{thm:finite-vertices-enough}. 
Let~$P$ be a convex polytope. Among the vertices of~$P$, we  
identify a subset~$V$ of size~${k_{\eps,d}}$ and 
show that~$V$ fulfills the desired properties. 

In order to obtain some intuition, assume that the maximum-volume hyperellipsoid
contained in~$P$ 
is a hypersphere, and
that the minimum-volume hypersphere that has the same center and
contains~$P$ 
is not much bigger than that
hypersphere (as Lemma~\ref{lem:ball} shows, this is without loss of generality). 
Then we know that~$P$ has a ``regular'' shape not too far from the
internal and the external hypersphere. We can use this, along with
results from the core-set framework for extent
measures~\cite{coreset-survey,Chan06}, to
identify a subset~$V$ of the vertices, of constant cardinality, 
that defines a new polytope. 
From this new polytope we construct, by 
scaling 
from a center, 
a polytope of constantly many vertices, that is ``not too far'' from~$P$
(see Lemma~\ref{lem:sparcify}). The proof of
Theorem~\ref{thm:finite-vertices-enough} directly follows from
combining the facts above.

We use an auxiliary technical lemma, which is an extension of the
following  result of Ball~\cite{Ball1992}. For the remainder of the
section, let~${B(c,r)}$ denote
the hypersphere with center~$c$ and radius~$r$ either in the entire~$d$-dimensional space or within an affine subspace implied by the context.
 
\begin{lemma}[Ball~\cite{Ball1992}, Remarks]
  \label{lem:balls-own}
  If~$C$ is a convex body whose contained~$d$-dimensional hyperellipsoid of maximal volume
  is~${B(c,1)}$ (for some center~$c$), then the diameter of~$C$ is upper
  bounded by~${\sqrt{2d(d+1)}}$.
\end{lemma}

The following auxiliary lemma relates a polytope~$P$ to specific
hyperellipsoids and hyperspheres in~${\mathbb{R}^d}$.

\begin{lemma}
  \label{lem:ball}
  For any convex polytope~$P$ with~${d'}$-dimensional volume, where~${d'\le d}$ is maximal, let~$S$ be the~${d'}$-dimensional affine subspace of~${\mathbb{R}^d}$ containing~$P$. 
  There 
  exists a center~${c\in P}$ and 
  an affine transformation~${\mathcal{F}:S\rightarrow S}$ such
  that~${\mathcal{F}(P)}$ 
  has the following properties:
  \begin{itemize}
  \item The maximum-volume~${d'}$-dimensional hyperellipsoid contained in~${\mathcal{F}(P)}$ 
    is~${B(c,1)}$.
  \item $B(c,d')$ contains~${\mathcal{F}(P)}$.
  \end{itemize}
\end{lemma}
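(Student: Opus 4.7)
The plan is to reduce the statement to the classical John ellipsoid theorem (with Ball's refinement) by identifying the affine subspace $S$ with $\mathbb{R}^{d'}$ and then pulling the resulting affine map back to $S$. Concretely, first I would fix an arbitrary point $o\in P$ as an origin and choose an orthonormal basis of the linear subspace parallel to $S$; this yields an isometry $\Phi: S\to \mathbb{R}^{d'}$, so that $K:=\Phi(P)$ is a full-dimensional convex polytope in $\mathbb{R}^{d'}$. The goal is now to find an affine map $\mathcal{G}:\mathbb{R}^{d'}\to\mathbb{R}^{d'}$ such that $K':=\mathcal{G}(K)$ has the required properties, and then set $\mathcal{F}:=\Phi^{-1}\circ\mathcal{G}\circ\Phi$ (which is a well-defined affine self-map of $S$).

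Next I would invoke John's theorem (in the form cited in the paper as~\cite{John1948,Ball1992}): every full-dimensional convex body $K\subseteq\mathbb{R}^{d'}$ has a unique maximum-volume inscribed ellipsoid $E$, and after applying the (unique up to rotation) affine map $\mathcal{G}$ that sends $E$ to a unit ball $B(c,1)$ with the same center, the image $K'=\mathcal{G}(K)$ is contained in the homothetic ball $B(c,d')$. Here $\mathcal{G}$ is constructed by composing the translation that sends the center of $E$ to $c$ with the linear map that diagonalizes $E$ and scales its semi-axes to $1$; since affine maps take maximum-volume inscribed ellipsoids to maximum-volume inscribed ellipsoids, $B(c,1)$ is indeed the John ellipsoid of $K'$. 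The fact that $c\in K'$, and hence $\Phi^{-1}(c)\in P$, follows because the John ellipsoid is contained in $K$, so its center lies in $K$, and therefore in $K'$ after transformation.

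Finally, I would translate back: setting $\mathcal{F}=\Phi^{-1}\circ\mathcal{G}\circ\Phi$ gives an affine self-map of $S$ with $P'=\mathcal{F}(P)=\Phi^{-1}(K')$. Because $\Phi$ is an isometry, it preserves $d'$-dimensional volumes, ellipsoids, and balls within $S$; hence the maximum-volume $d'$-dimensional hyperellipsoid in $P'$ is $\Phi^{-1}(B(c,1))$, which is the $d'$-dimensional unit ball in $S$ centered at $\Phi^{-1}(c)$, and $P'\subseteq\Phi^{-1}(B(c,d'))$, the corresponding ball of radius $d'$. Renaming $\Phi^{-1}(c)$ to $c$ (and observing $c\in P$) yields exactly the two claimed properties.

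The main obstacle, if any, is a bookkeeping one rather than a mathematical one: one has to be careful that ``hyperellipsoid'' and ``hypersphere'' in the statement refer to $d'$-dimensional objects lying in $S$, not $d$-dimensional objects in $\mathbb{R}^d$, and that $\mathcal{F}$ must be a map $S\to S$ rather than $\mathbb{R}^d\to\mathbb{R}^d$. The identification via the isometry $\Phi$ handles both issues cleanly, and no genuinely new geometric argument beyond citing John's/Ball's theorem is required.
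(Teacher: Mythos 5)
Your proposal is correct and follows essentially the same route as the paper: take the maximum-volume inscribed ellipsoid, apply the affine map turning it into a unit ball about its (unchanged) center, note that affine maps preserve maximality of inscribed ellipsoids (the paper spells this out via a short contradiction with $\mathcal{F}^{-1}$), and invoke Ball's/John's theorem for the containment in $B(c,d')$. Your explicit isometric identification of $S$ with $\mathbb{R}^{d'}$ is just a cleaner bookkeeping of the lower-dimensional case, which the paper handles implicitly.
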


\begin{proof}
  Consider the maximal-volume~${d'}$-dimensional hyperellipsoid~$\mathcal{E}$ contained in~$P$, and let its center be~$c$. 
  Let~$\mathcal{F}$ be the affine transformation\footnote{A hyperellipsoid is defined as the image 
  	of a hypersphere under invertible linear transformations.}
  that transforms~$\mathcal{E}$ into~${B(c,1)}$. 

  Since~$P$  
  contains~$\mathcal{E}$, 
 the transformed polytope~${\mathcal{F}(P)}$ must contain~${B(c,1)}$. 
  Let~${P'= \mathcal{F}(P)}$.
  In order
  to  show the first property, we still need to prove that~${B(c,1)}$ is
  the maximal-volume hyperellipsoid contained in~${P'}$. 
  So suppose 
  that
  there exists a hyperellipsoid~${\mathcal{E}'}$ contained in~${P'}$ with volume strictly larger than~${B(c,1)}$. 
  But then we could apply
  the affine transformation~${\mathcal{F}^{-1}}$ to~${\mathcal{E}'}$ and
  obtain a hyperellipsoid contained in~$P$ of larger volume 
  than~$\mathcal{E}$ (the fact that the volume order is preserved
  follows by a simple change of variables and because the
  transformation is non-degenerate), a
  contradiction. 

The second property directly follows by Lemma~\ref{lem:balls-own} by
setting~${d = d'}$ and observing that~${d'\ge 1}$.
\end{proof}

It remains to show how one can ``sparsify'' a given
polytope, that is, how one can reduce the number of vertices of the
polytope to~${O_{\eps,d}(1)}$ many, while still maintaining a set of
desirable properties. In order to do this, we employ a result from the
framework of core-sets~\cite{coreset-survey,Chan06}. In order to keep
the paper self-contained and to simplify cross-checking, we include
the related definitions and results explicitly, although we only
need parts of them for our results.

\begin{Definition}[$\epsilon$-core-set~\cite{Chan06}]
  Given a double-argument measure~${\mu(\cdot,\cdot)}$ 
we say that a subset~${R\subseteq V}$ is an~$\epsilon$-core-set of~$V$ (over a set~$Q$) 
if~$R$ is of constant size and~${\mu(V,x)\ge \mu(R,x)\ge (1-\epsilon)\mu(V,x)}$ for all~$x$ (in~$Q$). 
\end{Definition}

We are only interested in one particular measure, the extent measure:

\begin{Definition}[Extent measure~\cite{Chan06}]
For some point set~$V$ in the~$d$-dimensional Euclidean space, let
the \emph{extent measure} with the respect to a direction vector~$\vec{x}$ be~${w(V,\vec{x}):= \max_{p,q\in V}(\vec{p}-\vec{q})\cdot\vec{x}}$. 
The \emph{one-sided extent measure} is similarly defined as~${\overline{w}(V,\vec{x}) := \max_{p\in V} \vec{p}\cdot\vec{x}}$.
\end{Definition}

The following results are known for~$\epsilon$-core-sets for the extent measure:

\begin{theorem}[\cite{Chan06}]\label{thm:finding-coresets}
  Given an~$m$-point set in~${\mathcal{R}^d}$, one can construct an~$\epsilon$-core-set of size~${O(1/\epsilon^{(d-1)/2})}$ for the
  extent measure in time~${O(m+1/\epsilon^{d-3/2})}$ 
  or time~${O((m+1/\epsilon^{d-2})\log (1/\epsilon))}$.
\end{theorem}

\begin{observation}[\cite{Chan06}]\label{obs:coresets-inqlty}
  If~$R$ is an~$\epsilon$-core-set for~$V$ for the extent measure,
  then the following holds:
\begin{align*}
  \overline{w}(R,\vec{x})\ge \overline{w}(V,\vec{x})-\epsilon w(V,\vec{x}).
\end{align*}
\end{observation}

We are now ready to prove the following lemma which is helpful in
``sparsifying'' a given polytope.

\begin{lemma}\label{lem:sparcify}
  Let~${\eps>0}$, and let~$P$ be a 
  convex polytope with~${d'}$-dimensional volume, where~${d'\le d}$ is maximal, 
  spanning a~${d'}$-dimensional affine subspace~$S$ of~${\mathbb{R}^d}$ such that: 
  \begin{align*}
    B(\vec{0},1) \subseteq P \subseteq B(\vec{0},d').
  \end{align*}
  There is a set~${V'}$ of points with the following properties:
  \begin{enumerate}[label=(\roman*),font=\normalfont]
  \item For each point~${v'\in V'}$, there exists a vertex~$v$ of~$P$ with~${\vec{v}'=(1+\eps)\vec{v}}$, \label{item:vertices}
  \item $|V'|=k_{\eps,d} \in O_{\eps,d}(1)$, \label{item:onlyfew}
  \item $P \subseteq \conv(V')$. \label{item:contained} 
  \end{enumerate}
\end{lemma}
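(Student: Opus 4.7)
The plan is to build $V'$ by selecting, for each direction in a sufficiently fine angular $\theta$-net $N$ on the unit sphere within the affine hull $S$ of $P$, one vertex $v_n \in \vertices(P)$ that maximizes $\langle v, n\rangle$, and then scaling each such vertex by $1+\eps$ from the origin: $V' := \{(1+\eps)v_n : n\in N\}$. The resolution $\theta$ will be tied to the constant $k_{\eps,d}$ from Definition~\ref{def:baseset}; a standard product-type net on $S^{d'-1}$ achieves both size $\le k_{\eps,d}$ and the angular density required below. Properties~\ref{item:vertices} and~\ref{item:onlyfew} are then immediate from the construction.

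For property~\ref{item:contained}, I argue via support functions. Let $Q := \conv\{v_n : n\in N\}$, so that $\conv(V') = (1+\eps)Q$, and $P \subseteq \conv(V')$ is equivalent to $h_P(m) \leq (1+\eps)\,h_Q(m)$ for every unit $m\in S$, where $h_X(m) := \max_{x\in X}\langle x,m\rangle$. Given such an $m$, pick $n\in N$ with $|m-n|\leq\theta$. Since $P\subseteq B(\vec{0},d')$, the support function $h_P$ is $d'$-Lipschitz, so $h_P(n) \geq h_P(m) - d'\theta$. By the choice of $v_n$ we have $\langle v_n,n\rangle = h_P(n)$, and $|v_n|\le d'$ together with Cauchy--Schwarz yields $\langle v_n,m\rangle \geq \langle v_n,n\rangle - d'\theta$. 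Chaining these bounds,
\[
h_Q(m) \;\geq\; \langle v_n,m\rangle \;\geq\; h_P(m) - 2d'\theta.
\]
Since $B(\vec{0},1)\subseteq P$ also gives $h_P(m)\geq 1$, and hence $h_Q(m)\geq 1-2d'\theta$, a short rearrangement reduces the required inequality $(1+\eps)h_Q(m)\geq h_P(m)$ to a condition of the form $\theta \lesssim \eps/d$, which is exactly what the choice of $\theta$ tied to $k_{\eps,d}$ is meant to guarantee.

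The main obstacle, I expect, is matching the specific constant $k_{\eps,d}$ from Definition~\ref{def:baseset}: the $\arctan(\eps/\sqrt{d^2-1})$ and $\sqrt{d}$ factors suggest a coordinate-wise net construction whose exact size--resolution trade-off needs to be checked carefully against the condition $\theta \lesssim \eps/d$ that my support-function computation demands. Handling the case $d' < d$ (when $P$ is lower-dimensional) is comparatively straightforward: one carries out the entire argument inside the affine subspace $S$ containing $P$, and the analogous net on $S^{d'-1}$ is only smaller, still fitting well within the budget $k_{\eps,d}$.
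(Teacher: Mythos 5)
Your proof is correct, but it takes a genuinely different route from the paper's. The paper shoots a grid of rays (in hyperspherical coordinates) from the origin, takes their intersection points with $\surf{P}$, and proves containment by a trigonometric argument: for a surface point $p$ it locates the $2^{d'-1}$ surrounding rays and shows each nearby intersection point projects onto $\Span(\vec{p})$ with length at least $|\vec p|/(1+\eps)$, using that the angle between consecutive rays is at most $\arctan(\eps/\sqrt{d^2-1})$ and that the angle at $p$ is at least $\arcsin(1/d)$ (forced by $B(\vec 0,1)\subseteq P\subseteq B(\vec 0,d')$); it then needs a final step replacing each surface point by the at most $d$ vertices in whose convex hull it lies, to get property~(i). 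Your support-function argument uses the same two balls differently --- the outer one for the $d'$-Lipschitz bound on $h_P$ and on $|v_n|$, the inner one for $h_P(m)\ge 1$ --- and arrives at the cleaner condition $\theta\le\eps/(2d'(1+\eps))$ with no trigonometry; moreover, by selecting a maximizing \emph{vertex} for each net direction you get property~(i) for free and skip the paper's convex-combination step. The loose ends you flag are genuinely minor: the lemma only needs $|V'|\in O_{\eps,d}(1)$, and the specific value of $k_{\eps,d}$ in Definition~\ref{def:baseset} is reverse-engineered from the paper's own construction, so with your approach one would simply redefine $k_{\eps,d}$ to be the size of your net (any product-type angular net of resolution $\Theta(\eps/d)$ on $S^{d'-1}$, intersected with $S$, has $O_{\eps,d}(1)$ elements and suffices). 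The one step worth spelling out is the reduction to unit directions $m\in S$: since $P$ and $Q$ both lie in the linear subspace $S$ (which contains the origin because $B(\vec 0,1)\subseteq P$), $h_P$ and $h_Q$ depend only on the projection of $m$ onto $S$, so checking $h_P\le(1+\eps)h_Q$ on unit vectors of $S$ indeed establishes $P\subseteq(1+\eps)Q$ in $\R^d$.
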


\begin{proof}

Let~${\epsilon':= \epsilon/((1 + \epsilon)2d')}$ and 
let~$R$ be an~${\epsilon'}$-core-set obtained by applying
Theorem~\ref{thm:finding-coresets} to the set~$V$ of vertices of~$P$. 
Scale each point in~$R$ by the factor~${1+\epsilon}$ from the
origin, that is, for all~${v\in R}$ set~${\vec{v'} = (1+\epsilon)\vec{v}}$ and obtain the set~${V'}$. 
Note that, by construction,~${V'}$
satisfies Property~\ref{item:grid-points}, and furthermore also Property~\ref{item:onlyfew} since~${|V'|=|R|= O(1/{\epsilon'}^{(d'-1)/2})=O_{\eps,d}(1)}$, and thus it remains to show Property~\ref{item:contained}.

Suppose for the sake of contradiction that there exists a point~${p'\in P}$ 
such that~${p'\not\in \conv(V')}$. Given the convexity of both~${\conv(V')}$ and~${p'}$, we know by the separation theorem that there
must exist a hyperplane~$h$ separating~${\conv(V')}$
and~${p'}$. Let~$\vec{x}$ be a normal vector of~$h$ oriented towards
the side of~$h$ that contains~$p'$,
 and let~${p=\arg\max_{p\in P}\vec{p}\cdot\vec{x}}$. 
Thus,~${\vec{p}\cdot\vec{x}\ge \vec{p'}\cdot \vec{x}}$ and consequently~$p$ lies on the same side of~$h$ as~${p'}$, which means that~$p$ is also separated from~${\conv(V')}$ by~$h$.

Hence, by Observation~\ref{obs:coresets-inqlty} we have
\begin{align*}
\overline{w}(R,\vec{x}) &\ge \overline{w}(V,\vec{x}) -
\epsilon'w(V,\vec{x})\\
&\ge \vec{p}\cdot\vec{x}  -
\epsilon'2d'\\
&\ge \vec{p}\cdot\vec{x} - \frac{\epsilon}{1+\epsilon},
\end{align*}
where the last two inequalities hold by the definitions of~$p$ and~${\epsilon'}$ and by the containment~${\conv(V)\subseteq B(\vec{0},d')}$.
We therefore
have
\begin{align*}
  \overline{w}(V',\vec{x}) &= (1+\epsilon)\overline{w}(R,\vec{x})\ge
  (1+\epsilon)\left( \vec{p}\cdot\vec{x} -
    \frac{\epsilon}{1+\epsilon}\right)\\ &=
  (1+\epsilon)\vec{p}\cdot\vec{x} - \epsilon = \vec{p}\cdot\vec{x}
  +\epsilon(\vec{p}\cdot\vec{x} - 1) \ge \vec{p}\cdot\vec{x},
\end{align*}
where the first equality holds by the definition of~${V'}$ and the
linearity of the dot product of Euclidean vectors, the first
inequality by substituting~${\overline{w}(R,\vec{x})}$ from above,
and the last inequality because~${\vec{p}\cdot\vec{x}\ge 1}$ since~${P\supseteq B(\vec{0},1)}$ and~${p\in P}$. 
However, this is a contradiction, since by the fact that~$p$
and~$\conv(V')$ lie on different sides of~$h$ and given that~$\vec{x}$
is oriented towards the side containing~$p$,
it follows that the one-sided extent measure of~${V'}$ with the respect to~$\vec{x}$ must be strictly less than~${\vec{p}\cdot\vec{x}}$, 
that is,~${\overline{w}(V',\vec{x}) < \vec{p}\cdot\vec{x}}$. The proof is illustrated in Figure~\ref{fig:sparcify}.
\end{proof}

\begin{figure*}
\centering
\begin{tikzpicture}[scale=0.52]
    \tikzstyle{every node}=[inner sep=0pt, square/.style={regular polygon,regular polygon sides=4}]
		\coordinate (O) at (0,0);
		
		\coordinate (B) at (-4.5,8.5);
		\coordinate (Q) at ($ (O)!1.60!(B) $);
		\coordinate (R) at (-6,7);
		\coordinate (G) at ($(B)!2.8!(R)$);
		
		\coordinate (F) at (7.5,8);
		\coordinate (P) at ($ (O)!1.60!(F) $);
		\coordinate (S) at (12,5.5);
		\coordinate (H) at ($(F)!1.6!(S)$);
		
		\coordinate (E) at (4.5,11.5);
		\coordinate (M) at ($ (O)!1.30!(E) $);
		\coordinate (T) at (9,10.14);
		\coordinate (J) at ($(E)!1.7!(T)$);
		
		\coordinate (Y) at (2.5,10);
		\coordinate (N) at ($ (O)!1.30!(Y) $);
		
		\coordinate (D) at (0,9.3);
		\coordinate (Z) at (-1.8,3);
		\coordinate (W) at ($(D)!1.45!(Z)$);

      \draw[dashed] (G) -- (R);
      \draw (R) -- (B);
      
      \draw[dashed] (H) -- (S);
      \draw (S) -- (F);
      
      \draw[dashed] (J) -- (T);
      \draw (T) -- (E);
      
      \draw[dashed] (W) -- (Z);
      \draw (Z) -- (D);
      \draw (B) -- (D) -- (E) -- (F) -- (D);
      \draw (E) -- (B);
      \draw[dashed] (O) -- (B);
      \draw[dashed] (O) -- (F);
      \draw[dashed] (O) -- (E);
      \draw (B) -- (Q);
      \draw (F) -- (P);
      \draw (E) -- (M);
      \node () at (0.4,-0.3) {$\vec{0}$};
      \draw[fill=blue] (B) circle (0.5em);
      \draw[fill=blue] (F) circle (0.5em);
      \draw[fill=blue] (E) circle (0.5em);
      \node (EB) at ($(B)!.3!(Q)$) [square, fill=red, inner sep=0.15em, draw] {};
      \node (EF) at ($(F)!.25!(P)$) [square, fill=red, inner sep=0.15em, draw] {};
      \node (EE) at ($(E)!.4!(M)$) [square, fill=red, inner sep=0.15em,
      draw] {};
      \draw[very thick,dashed,green] (O) -- (Y);
      \draw[fill=gray] (Y) circle (0.5em);
      \node at (2.9,9.7) {$p$};
      \draw[very thick,dashed,green,->] (Y) -- ($(O)!1.5!(Y)$);
\end{tikzpicture}
\caption{An example in 3D: The blue (round) vertices correspond to points in
  the core-set~$R$, and the red (square) vertices are obtained by scaling the blue ones 
  by the factor~${1+\epsilon}$ from the origin. The thick dashed line
  through~$p$ is the direction of~$\vec{x}$.}
\label{fig:sparcify}
\end{figure*}
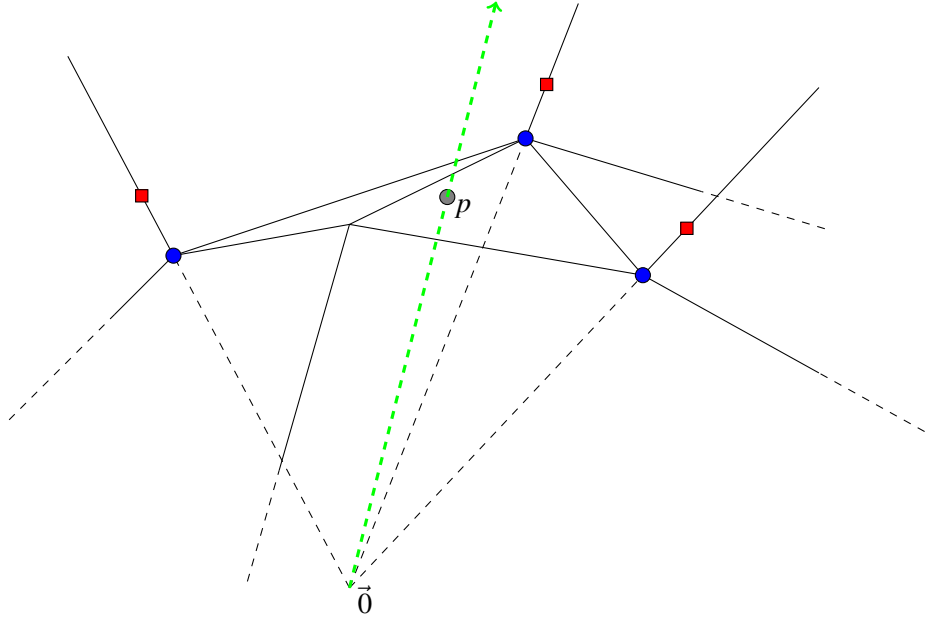

We are now ready to prove Theorem~\ref{thm:finite-vertices-enough}.

\begin{proof}[Proof of Theorem~\ref{thm:finite-vertices-enough}]
   Let~$P$ be any convex polytope in~$\mathbb{R}^d$. Apply Lemma~\ref{lem:ball}
to~$P$. This gives us~${P'}$ and a center~${c}$. Now, translate~${P'}$ by~${-\vec{c}}$ such that the
center of the two hyperspheres is~$\vec{0}$. Let~${P''}$ be this translate 
of~${P'}$. We apply Lemma~\ref{lem:sparcify} to~${P''}$ which in turn gives us a
set of points~${V'}$, that satisfy Properties~\ref{item:vertices},~\ref{item:onlyfew}, and~\ref{item:contained} from
Lemma~\ref{lem:sparcify}. The vertices of~$P$ corresponding to~${V'}$ satisfy
the theorem. 
\end{proof}

\section{Our Algorithm}
\label{sec:algo}
In this section, we show that there exists a strongly polynomial linear
time algorithm that finds a tour of minimum
length among all the tours that visit the vertices of some feasible polytope
in~${\para(\grid)}$. More formally, we show the following lemma.

\begin{lemma}\label{lem:alg} 
	Let~$\grid$ be a given integer grid of constant size. There is a strongly
polynomial linear time algorithm that computes for any set~$\mathcal{I}$ of
input hyperplanes a feasible tour~$T$ of minimum length such
that~${\conv(T)\in\para(\grid)}$. 
\end{lemma}

Note that together with Lemma~\ref{lem:finite-grid-enough}, Lemma~\ref{lem:alg} yields Theorem~\ref{thm:main}.

\subsection{Overview of the algorithm}
\label{subsec:alg-overview}

The idea of the algorithm is to construct several linear programs, such that the
one that finds the shortest tour among all of them provides the actual shortest
tour of all possible tours that are tours of the vertices of some polytope
in~${\para(\grid)}$. We construct one LP for each of the constantly many base
polytopes of~$\grid$, which we simply enumerate. Each of the LPs has~${d+1}$ variables and a number of
constraints that is linear in~$n$u. Note that, given a base
polytope, we can compute the optimal tour of its vertices, that is, the order in
which its vertices are visited. Moreover, this order is the same for any scaled
and translated polytope, and the length of the tour is scaled equally to the
polytope.

We next describe the construction of the LPs in more detail. For a given base
polytope~$\gridpoints$, the LP maintains a scaling variable~$\scalevar$ and a
vector~$\translvar$ of translation variables. These variables represent a linear
transformation of~$\gridpoints$ in which~$G$ is scaled by the
factor~$\scalevar$ (from the origin) 
and translated by the vector~$\translvar$. 
We assume that~$\gridpoints$ is given as a set of grid points of~$\grid$.

By Lemma~\ref{lem:prelim1}, the optimal tour of the scaled and translated
polytope is feasible if the polytope intersects each input hyperplane. To ensure
this, we use an idea similar to that of~\citet{TSPNlinesBallsPlanes}: For each
input hyperplane, we select two vertices of~$\gridpoints$ (the \emph{separated pair}) and write a
\emph{feasibility constraint} requiring the two vertices to be on different
sides of the hyperplane. These constraints ensure that the convex hull of the vertices
intersects each hyperplane and thus any tour that visits all its vertices is
feasible (Lemma~\ref{lem:prelim1}). For each input hyperplane~${i\in\mathcal{I}}$,
we let the separated pair consist of two vertices~${p,q\in\gridpoints}$ for which
there are translates~$i_p$ and~$i_q$ of~$i$ such that~$p$ lies in~$i_p$ and~$q$
lies in~$i_q$ and~$G$ lies in the convex hull of these two hyperplanes.

In the rest of this section, we describe our construction of the LPs that
altogether take into consideration the whole search space consisting of all
polytopes in~${\para(\grid)}$. Finally, we proof Lemma~\ref{lem:alg} by proving
that the running time of the algorithm is strongly polynomial linear.

\subsection{The LPs}

Let~$\gridpoints$ be given as the set of its vertices. 
The LP scales~$\gridpoints$ by the variable factor~${\scalevar\in
  \R^+}$ from the origin 
and translates it by the
variable vector~${\vec{\translvar}\in\R^d}$.
We denote the scaled and shifted set
by~${\scalevar\gridpoints+\vec{\translvar}}$.

For each hyperplane~${i\in\mathcal{I}}$, given by a
normal vector~$\normvect_i$
and a value~$\val_i$, such that~${\val_i\normvect_i\in i}$,
we define the separated pair~${(\separate_{i}^{\;+},\separate_{i}^{\;-})\in \gridpoints^2}$ 
of~$i$ as 
\begin{align*}
\separate_{i}^{\;+}\in\argmax_{\vec{\gridpoint}\in\gridpoints}\dotprod{\vec{\gridpoint}}{\normvect_i}\qquad\text{and}\qquad
\separate_{i}^{\;-}\in\argmin_{\vec{\gridpoint}\in\gridpoints}\dotprod{\vec{\gridpoint}}{\normvect_i}\,.
\end{align*}

\begin{lemma} \label{lem:separated}
	The convex 
	hull 
	of~${\scalevar\gridpoints+\vec{\translvar}}$ intersects the hyperplane~${i\in\mathcal{I}}$ if and only if~${\dotprod{\scalevar\separate_{i}^{\;+}+\vec{\translvar}}{\normvect_i}\geq \val_i}$ and~${\dotprod{\scalevar\separate_{i}^{\;-}+\vec{\translvar}}{\normvect_i}\leq \val_i}$. 
\end{lemma}
\begin{proof}
	By the definitions of~${\separate_{i}^{\;+}}$
        and~${\separate_{i}^{\;-}}$, for any~${\scalevar\in \R^+}$
        and~${\vec{\translvar}\in\R^d}$, we have
	\begin{align*}
		\scalevar\separate_{i}^{\;+}+\vec{\translvar}&=\argmax_{\vec{\gridpoint}\in\scalevar\gridpoints+
		\vec{\translvar}}\dotprod{\vec{\gridpoint}}{\normvect_i}\qquad\text{and}\qquad 
		\scalevar\separate_{i}^{\;-}+\vec{\translvar}=\argmin_{\vec{\gridpoint}\in\scalevar\gridpoints+
		\vec{\translvar}}\dotprod{\vec{\gridpoint}}{\normvect_i}\,.
	\end{align*}
	Now, let~${P=\conv(\scalevar\gridpoints+\vec{\translvar})}$.
	Suppose that~$P$ does 
	intersect~$i$, then there is a point~${\gridpoint\in P}$ such that 
	\[
		\dotprod{\vec{\gridpoint}}{\normvect_i}=\val_i
	\]
	and therefore 
	\[
		\dotprod{\scalevar\separate_{i}^{\;+}+\vec{\translvar}}{\normvect_i}\ge \val_i
		\qquad\text{and}\qquad
		\dotprod{\scalevar\separate_{i}^{\;-}+\vec{\translvar}}{\normvect_i}\le \val_i\,.
	\]
	Now, suppose 
	\[
		\dotprod{\scalevar\separate_{i}^{\;+}+\vec{\translvar}}{\normvect_i}\geq \val_i\qquad\text{and}\qquad
		\dotprod{\scalevar\separate_{i}^{\;-}+\vec{\translvar}}{\normvect_i}\leq \val_i\,.
	\] 
	Then, 
	there is some convex combination~${\vec{\gridpoint}}$ of~${\scalevar\separate_{i}^{\;+}+\vec{\translvar}}$ and~${\scalevar\separate_{i}^{\;-}+\vec{\translvar}}$, such that 
	\[
		\dotprod{\vec{\gridpoint}}{\normvect_i}= \val_i
	\]
	and thus~$P$ intersects~$i$.
\end{proof}

From Lemma~\ref{lem:separated}, we obtain that the system of inequalities 
\begin{align*}
\dotprod{\scalevar\separate_{i}^{\;+}+\vec{\translvar}}{\normvect_i}&\ge \val_i&\forall i\in\mathcal I\\
\dotprod{\scalevar\separate_{i}^{\;-}+\vec{\translvar}}{\normvect_i}&\le \val_i&\forall i\in\mathcal I\\
\scalevar&\in \R^+\\
\vec{\translvar}&\in \R^d
\end{align*}
describes all polytopes~${P\in \para(\grid)}$ that correspond to a given base
polytope~$\gridpoints$ and intersect all hyperplanes in~$\mathcal{I}$. The
objective of the LP is simply to minimize the total length of the tour of the
computed set of vertices~${\scalevar\gridpoints+\vec{\translvar}}$. This is equivalent
to minimizing the scaling variable~$\scalevar$. The comparison of the outcomes
of two LPs for two different base polytopes~$\gridpoints_1$ and~$\gridpoints_2$
is then made by multiplying the objective values by~${|\tsp(\gridpoints_1)|}$
and~${|\tsp(\gridpoints_2)|}$, respectively.

\subsection{Proof of Lemma~\ref{lem:alg} and Discussion of the Running Time}
\label{subsec:proof-lemma}

\begin{proof}[Proof of Lemma~\ref{lem:alg}.]
	The number of LPs that we solve is equal to the number of 
	base polytopes of~$\grid$, which is constant. 
		 Since each LP has constant many variables and~${O(n)}$ constraints, where~${n=|\calI|}$, the running time of each LP is strongly polynomial linear in~$n$~\cite{DBLP:journals/jacm/Megiddo84, Chan16}.  
		 
		 It remains to show that we can select the shortest tour in linear time. 
		 In order to compare the Euclidean tour lengths, we compare sum of square roots (over integers by appropriate scaling).
		 In general, it is a long-standing open problem whether two such sums can be efficiently compared~\cite{rourke81openproblem,openProblemsProject33}. 
		 However, it is known that two such sums over~$k$ many square roots over integers from~${\{0,\dots,m\}}$ either differ by at least~${m^{-O(2^{k})}}$ or they are equal~\cite{Burnikel2000}. In our case,~$k$ is constant, thus 
		 it suffices to extract~${O_k(\log m)}$ digits of the square roots in order to compare two sums. 
		 By our assumption on the computational model, an extraction of this precision takes constant time\footnote{Computing a square root with precision~$n$ can be easily achieved by first multiplying the radicand by~$2^{2n}$, computing the integer square root and then dividing it by~$2^n$. Also note that the space is bounded by a polynomial in the input length and so is~$O_k(\log m)$.}, and thus comparing the lengths of two tours takes also constant time.
\end{proof}

If we consider the grid size~$g$ that we use in the proof of
Theorem~\ref{thm:main}, we obtain the following dependence of the
running time 
on the constants~$d$ and~$\eps$:
The number of base polytopes, that is, the number of LPs that we solve, can be generously bounded by~${2^{g^d}}$.
Given that the number of variables is~${d+1}$, each LP can be solved in time~${d^{O(d)} n}$~\cite{DBLP:journals/jacm/Megiddo84, Chan16}.
Thus, the running time of our algorithm is~$2^{g^d} d^{O(d)} n =
2^{O(1/\eps)^{d^2}}n$, where the equality follows from~${g=\left\lceil
    {\eps'}^{-1}k_{\eps',d}\cdot\sqrt{d}(2^d+1)\right\rceil =
  O(1/\eps')^{d} d^{O(1)} = O(1/\eps')^{d}}$
 and by choosing~${\eps'=O(\eps)}$.

\section{Extensions and Discussion}
\label{sec:discussion}
While we present a PTAS for TSP with hyperplane neighborhoods in this paper, the exact complexity status of the problem remains open. Even for $d$ as part of the input, it is not known whether the problem is NP-hard.

It would be interesting to find further applications of our techniques.
It is straightforward to extend our result to the TSPN path
problem with hyperplane neighborhoods where we want to visit all neighborhoods with a path instead of a tour. 
It is also not too hard to extend our result to cases where the tour or path is constrained to additionally visit some constantly many points or hyperplanes in a specific order (with the 
 respect to all hyperplanes).

For instance, consider the variant of the TSPN path problem where the input additionally specifies a start point~$s$: 
We guess for each base polytope the grid cell where the path should start, and then we assure via LP constraints that the shifted and scaled copy of this grid cell contains~$s$.
Then we connect~$T$ to~$s$.

Another version of the problem that is of interest and that
might admit a similar technique is the version where the input hyperplanes have to be
visited in a specific order. If furthermore the start point is given in the path version (as discussed above), then we arrive at 
the offline version of hyperplane chasing, which is an online problem where the hyperplanes are revealed one by one. 
Such online
chasing problems of convex bodies~\cite{FriedmanL93} and special cases as well as extensions thereof have received significant interest in the literature in recent years~\cite{chasing-bansal,chasing-argue,chasing-waoa,chasing-latin,chasing-approx,convex-bodies0,convex-bodies1,convex-bodies2}. 

Another promising direction for future research is trying to settle the complexity status of TSPN for other types of neighborhoods such as lower-dimensional affine subspaces and disks.

\paragraph{Acknowledgments.} The authors would like to thank Joseph
Mitchell for suggesting simplified versions of the proofs of
Lemma~\ref{lem:alg} and Theorem~\ref{thm:finite-vertices-enough}. We also thank several anonymous reviewers for their comments.

\bibliography{references}{}

\begin{thebibliography}{46}
\providecommand{\natexlab}[1]{#1}
\providecommand{\url}[1]{\texttt{#1}}
\expandafter\ifx\csname urlstyle\endcsname\relax
  \providecommand{\doi}[1]{doi: #1}\else
  \providecommand{\doi}{doi: \begingroup \urlstyle{rm}\Url}\fi

\bibitem[Agarwal et~al.(2005)Agarwal, Har-Peled, and
  Varadarajan]{coreset-survey}
Pankaj~K. Agarwal, Sariel Har-Peled, and Kasturi~R. Varadarajan.
\newblock Geometric approximation via coresets.
\newblock In \emph{Combinatorial and Computational Geometry, MSRI}, pages
  1--30. University Press, 2005.

\bibitem[Antoniadis and Schewior(2017)]{chasing-waoa}
Antonios Antoniadis and Kevin Schewior.
\newblock A tight lower bound for online convex optimization with switching
  costs.
\newblock In \emph{Proc. of International Workshop on Approximation and Online
  Algorithms ({WAOA 2017})}, pages 164--175, 2017.

\bibitem[Antoniadis et~al.(2016)Antoniadis, Barcelo, Nugent, Pruhs, Schewior,
  and Scquizzato]{chasing-latin}
Antonios Antoniadis, Neal Barcelo, Michael Nugent, Kirk Pruhs, Kevin Schewior,
  and Michele Scquizzato.
\newblock Chasing convex bodies and functions.
\newblock In \emph{Proc. of Latin American Symposium on Theoretical Informatics
  ({LATIN 2016})}, pages 68--81, 2016.

\bibitem[Argue et~al.(2019{\natexlab{a}})Argue, Bubeck, Cohen, Gupta, and
  Lee]{chasing-argue}
C.~J. Argue, S{\'e}bastien Bubeck, Michael~B. Cohen, Anupam Gupta, and Yin~Tat
  Lee.
\newblock A nearly-linear bound for chasing nested convex bodies.
\newblock In \emph{Proc. of ACM-SIAM Symposium on Discrete Algorithms ({SODA
  2019})}, pages 117--122. Society for Industrial and Applied Mathematics,
  2019{\natexlab{a}}.
\newblock URL \url{http://dl.acm.org/citation.cfm?id=3310435.3310443}.

\bibitem[Argue et~al.(2019{\natexlab{b}})Argue, Gupta, Guruganesh, and
  Tang]{convex-bodies2}
C.~J. Argue, Anupam Gupta, Guru Guruganesh, and Ziye Tang.
\newblock Chasing convex bodies with linear competitive ratio.
\newblock \emph{CoRR}, abs/1905.11877, 2019{\natexlab{b}}.
\newblock URL \url{http://arxiv.org/abs/1905.11877}.

\bibitem[Arkin and Hassin(1994)]{ArkinHassinTSPN}
Esther~M. Arkin and Refael Hassin.
\newblock Approximation algorithms for the geometric covering salesman problem.
\newblock \emph{Discrete Applied Mathematics}, 55\penalty0 (3):\penalty0
  197--218, 1994.

\bibitem[Arora(1998)]{AroraPTAS}
Sanjeev Arora.
\newblock Polynomial time approximation schemes for {E}uclidean traveling
  salesman and other geometric problems.
\newblock \emph{Journal of the {ACM}}, 45\penalty0 (5):\penalty0 753--782,
  1998.

\bibitem[Ball(1992)]{Ball1992}
Keith Ball.
\newblock Ellipsoids of maximal volume in convex bodies.
\newblock \emph{Geometriae Dedicata}, 41\penalty0 (2):\penalty0 241--250, 1992.

\bibitem[Bansal et~al.(2015)Bansal, Gupta, Krishnaswamy, Pruhs, Schewior, and
  Stein]{chasing-approx}
Nikhil Bansal, Anupam Gupta, Ravishankar Krishnaswamy, Kirk Pruhs, Kevin
  Schewior, and Clifford Stein.
\newblock A 2-competitive algorithm for online convex optimization with
  switching costs.
\newblock In \emph{Proc. of Approximation, Randomization, and Combinatorial
  Optimization. Algorithms and Techniques ({APPROX/RANDOM 2015})}, pages
  96--109, 2015.

\bibitem[Bansal et~al.(2018)Bansal, B{\"{o}}hm, Eli{\'{a}}s, Koumoutsos, and
  Umboh]{chasing-bansal}
Nikhil Bansal, Martin B{\"{o}}hm, Marek Eli{\'{a}}s, Grigorios Koumoutsos, and
  Seeun~William Umboh.
\newblock Nested convex bodies are chaseable.
\newblock In \emph{Proc. of {ACM-SIAM} Symposium on Discrete Algorithms ({SODA
  2018})}, pages 1253--1260, 2018.

\bibitem[Bartal and Gottlieb(2013)]{DBLP:conf/focs/BartalG13}
Yair Bartal and Lee{-}Ad Gottlieb.
\newblock A linear time approximation scheme for {E}uclidean {TSP}.
\newblock In \emph{Proc. of {IEEE} Symposium on Foundations of Computer Science
  ({FOCS 2013})}, pages 698--706, 2013.

\bibitem[Bartal et~al.(2016)Bartal, Gottlieb, and
  Krauthgamer]{DBLP:journals/siamcomp/BartalGK16}
Yair Bartal, Lee{-}Ad Gottlieb, and Robert Krauthgamer.
\newblock The traveling salesman problem: Low-dimensionality implies a
  polynomial time approximation scheme.
\newblock \emph{{SIAM} Journal on Computing}, 45\penalty0 (4):\penalty0
  1563--1581, 2016.

\bibitem[Berg et~al.(2005)Berg, Gudmundsson, Katz, Levcopoulos, Overmars, and
  van~der Stappen]{TSPNvaryingSizeAPXhardness}
Mark~de Berg, Joachim Gudmundsson, Matthew~J. Katz, Christos Levcopoulos,
  Mark~H. Overmars, and A.~Frank van~der Stappen.
\newblock {TSP} with neighborhoods of varying size.
\newblock \emph{Journal of Algorithms}, 57\penalty0 (1):\penalty0 22--36, 2005.

\bibitem[Bodlaender et~al.(2009)Bodlaender, Feremans, Grigoriev, Penninkx,
  Sitters, and Wolle]{CorridorConnectionProblem}
Hans~L. Bodlaender, Corinne Feremans, Alexander Grigoriev, Eelko Penninkx,
  Ren{\'e} Sitters, and Thomas Wolle.
\newblock On the minimum corridor connection problem and other generalized
  geometric problems.
\newblock \emph{Computational Geometry: Theory and Applications}, 42\penalty0
  (9):\penalty0 939--951, 2009.

\bibitem[Brent(1976)]{Brent1976}
Richard~P. Brent.
\newblock Fast multiple-precision evaluation of elementary functions.
\newblock \emph{Journal of the ACM}, 23\penalty0 (2):\penalty0 242--251, apr
  1976.
\newblock \doi{10.1145/321941.321944}.
\newblock URL \url{http://doi.acm.org/10.1145/321941.321944}.

\bibitem[Bubeck et~al.(2019)Bubeck, Lee, Li, and Sellke]{convex-bodies0}
Sebastien Bubeck, Yin~Tat Lee, Yuanzhi Li, and Mark Sellke.
\newblock Competitively chasing convex bodies.
\newblock In \emph{Proc. of ACM Symposium on Theory of Computing ({STOC
  2019})}, page to appear, 2019.

\bibitem[Burnikel et~al.(2000)Burnikel, Fleischer, Mehlhorn, and
  Schirra]{Burnikel2000}
Christoph Burnikel, Rudolf Fleischer, Kurt Mehlhorn, and Stefan Schirra.
\newblock A strong and easily computable separation bound for arithmetic
  expressions involving radicals.
\newblock \emph{Algorithmica}, 27\penalty0 (1):\penalty0 87--99, May 2000.
\newblock \doi{10.1007/s004530010005}.
\newblock URL \url{https://doi.org/10.1007/s004530010005}.

\bibitem[Carlsson et~al.(1999)Carlsson, Jonsson, and Nilsson]{watchman99}
Svante Carlsson, H{\aa}kan Jonsson, and Bengt~J. Nilsson.
\newblock Finding the shortest watchman route in a simple polygon.
\newblock \emph{Discrete {\&} Computational Geometry}, 22\penalty0
  (3):\penalty0 377--402, 1999.

\bibitem[Chan and Elbassioni(2011)]{Chan2011}
T.-H.~Hubert Chan and Khaled Elbassioni.
\newblock A {QPTAS} for {TSP} with fat weakly disjoint neighborhoods in
  doubling metrics.
\newblock \emph{Discrete {\&} Computational Geometry}, 46\penalty0
  (4):\penalty0 704--723, 2011.

\bibitem[Chan and Jiang(2018)]{TSPNdoublingMetrics}
T.{-}H.~Hubert Chan and Shaofeng~H.{-}C. Jiang.
\newblock Reducing curse of dimensionality: Improved {PTAS} for {TSP} (with
  neighborhoods) in doubling metrics.
\newblock \emph{{ACM} Transactions on Algorithms}, 14\penalty0 (1):\penalty0
  9:1--9:18, 2018.

\bibitem[Chan(2006)]{Chan06}
Timothy~M. Chan.
\newblock Faster core-set constructions and data-stream algorithms in fixed
  dimensions.
\newblock \emph{Computational Geometry}, 35\penalty0 (1-2):\penalty0 20--35,
  2006.

\bibitem[Chan(2016)]{Chan16}
Timothy~M. Chan.
\newblock Improved deterministic algorithms for linear programming in low
  dimensions.
\newblock In Robert Krauthgamer, editor, \emph{Proc. of {ACM-SIAM} Symposium on
  Discrete Algorithms ({SODA 2016})}, pages 1213--1219, 2016.

\bibitem[Christofides(1976)]{Christofides}
Nicos Christofides.
\newblock Worst-case analysis of a new heuristic for the travelling salesman
  problem.
\newblock Technical Report 388, Graduate School of Industrial Administration,
  Carnegie Mellon University, 1976.

\bibitem[Demaine et~al.()Demaine, Mitchell, and
  O'Rourke]{openProblemsProject33}
Erik~D. Demaine, Joseph S.~B. Mitchell, and Joseph O'Rourke.
\newblock The open problems project, {P}roblem 33: Sum of square roots.
\newblock \url{http://cs.smith.edu/~jorourke/TOPP/P33.html}.
\newblock (Accessed: 2019-05-01).

\bibitem[Dror and Orlin(2008)]{TSPNpointpairAPX}
Moshe Dror and James~B. Orlin.
\newblock Combinatorial optimization with explicit delineation of the ground
  set by a collection of subsets.
\newblock \emph{SIAM Journal on Discrete Mathematics}, 21\penalty0
  (4):\penalty0 1019--1034, 2008.

\bibitem[Dror et~al.(2003)Dror, Efrat, Lubiw, and Mitchell]{watchman03}
Moshe Dror, Alon Efrat, Anna Lubiw, and Joseph S.~B. Mitchell.
\newblock Touring a sequence of polygons.
\newblock In \emph{Proc. of ACM Symposium on Theory of Computing ({STOC
  2003})}, pages 473--482, 2003.

\bibitem[Dumitrescu(2012)]{TSPNlinesRays}
Adrian Dumitrescu.
\newblock The traveling salesman problem for lines and rays in the plane.
\newblock \emph{Discrete Mathematics, Algorithms and Applications}, 4\penalty0
  (4):\penalty0 44:1--44:12, 2012.

\bibitem[Dumitrescu and Mitchell(2003)]{TSPNplaneJoA}
Adrian Dumitrescu and Joseph~S.B. Mitchell.
\newblock Approximation algorithms for {TSP} with neighborhoods in the plane.
\newblock \emph{Journal of Algorithms}, 48\penalty0 (1):\penalty0 135--159,
  2003.

\bibitem[Dumitrescu and T\'{o}th(2016)]{TSPNlinesBallsPlanes}
Adrian Dumitrescu and Csaba~D. T\'{o}th.
\newblock The traveling salesman problem for lines, balls, and planes.
\newblock \emph{ACM Transactions on Algorithms}, 12\penalty0 (3):\penalty0
  43:1--43:29, 2016.

\bibitem[Elbassioni et~al.(2009)Elbassioni, Fishkin, and
  Sitters]{TSPNsegmentAPX}
Khaled~M. Elbassioni, Aleksei~V. Fishkin, and Ren{\'{e}} Sitters.
\newblock Approximation algorithms for the {E}uclidean traveling salesman
  problem with discrete and continuous neighborhoods.
\newblock \emph{International Journal of Computational Geometry \&
  Applications}, 19\penalty0 (2):\penalty0 173--193, 2009.

\bibitem[Friedman and Linial(1993)]{FriedmanL93}
Joel Friedman and Nathan Linial.
\newblock On convex body chasing.
\newblock \emph{Discrete {\&} Computational Geometry}, 9\penalty0 (3):\penalty0
  293--321, 1993.

\bibitem[John(1948)]{John1948}
Fritz John.
\newblock Extremum problems with inequalities as subsidiary conditions.
\newblock \emph{Studies and Essays Presented to R. Courant on his 60th
  Birthday}, pages 187--204, 1948.

\bibitem[Jonsson(2002)]{TSPNlinesPlaneJonsson}
H{\aa}kan Jonsson.
\newblock The traveling salesman problem for lines in the plane.
\newblock \emph{Information Processing Letters}, 82\penalty0 (3):\penalty0
  137--142, 2002.

\bibitem[Li and Klette(2008)]{watchman08}
Fajie Li and Reinhard Klette.
\newblock An approximate algorithm for solving the watchman route problem.
\newblock In \emph{Proc. of International Workshop Robot Vision ({RobVis
  2008})}, pages 189--206, 2008.

\bibitem[Mata and Mitchell(1997)]{DBLP:conf/compgeom/MataM97}
Cristian~S. Mata and Joseph S.~B. Mitchell.
\newblock A new algorithm for computing shortest paths in weighted planar
  subdivisions.
\newblock In \emph{Proc. of ACM Symposium on Computational Geometry ({SoCG
  1997})}, pages 264--273, 1997.

\bibitem[Megiddo(1984)]{DBLP:journals/jacm/Megiddo84}
Nimrod Megiddo.
\newblock Linear programming in linear time when the dimension is fixed.
\newblock \emph{Journal of the {ACM}}, 31\penalty0 (1):\penalty0 114--127,
  1984.

\bibitem[Mitchell(1999)]{MitchellPTAS}
Joseph S.~B. Mitchell.
\newblock Guillotine subdivisions approximate polygonal subdivisions: A simple
  polynomial-time approximation scheme for geometric {TSP}, $k$-{MST}, and
  related problems.
\newblock \emph{SIAM Journal on Computing}, 28\penalty0 (4):\penalty0
  1298--1309, 1999.

\bibitem[Mitchell(2007)]{sodaFatRegions}
Joseph S.~B. Mitchell.
\newblock A {PTAS} for {TSP} with neighborhoods among fat regions in the plane.
\newblock In \emph{Proc. of ACM-SIAM Symposium on Discrete Algorithms ({SODA
  2007})}, pages 11--18, 2007.

\bibitem[Mitchell(2010)]{Mitchell2010-pairwisedisjoint-connected}
Joseph~S.B. Mitchell.
\newblock A constant-factor approximation algorithm for {TSP} with
  pairwise-disjoint connected neighborhoods in the plane.
\newblock In \emph{Proc. of ACM Symposium on Computational Geometry ({SoCG
  2010})}, pages 183--191, 2010.

\bibitem[O'Rourke(1981)]{rourke81openproblem}
Joseph O'Rourke.
\newblock Advanced problems: 6369.
\newblock \emph{The American Mathematical Monthly}, 88\penalty0 (10):\penalty0
  769, 1981.
\newblock URL \url{http://www.jstor.org/stable/2321488}.

\bibitem[Papadimitriou and Yannakakis(1993)]{PapadimitriouY93}
Christos~H. Papadimitriou and Mihalis Yannakakis.
\newblock The traveling salesman problem with distances one and two.
\newblock \emph{Mathematics of Operations Research}, 18\penalty0 (1):\penalty0
  1--11, 1993.

\bibitem[Rao and Smith(1998)]{RaoSmithPTASimprovements}
Satish~B. Rao and Warren~D. Smith.
\newblock Approximating geometrical graphs via \enquote{spanners} and
  \enquote{banyans}.
\newblock In \emph{Proc. of ACM Symposium on Theory of Computing ({STOC
  1998})}, pages 540--550, 1998.

\bibitem[Safra and Schwartz(2006)]{TSPNcomplexity}
Shmuel Safra and Oded Schwartz.
\newblock On the complexity of approximating {TSP} with neighborhoods and
  related problems.
\newblock \emph{Computational Complexity}, 14\penalty0 (4):\penalty0 281--307,
  2006.

\bibitem[Sellke(2019)]{convex-bodies1}
Mark Sellke.
\newblock Chasing convex bodies optimally.
\newblock \emph{CoRR}, abs/1905.11968, 2019.
\newblock URL \url{http://arxiv.org/abs/1905.11968}.

\bibitem[Tan et~al.(1999)Tan, Hirata, and Inagaki]{watchmanCorrigendum}
Xuehou Tan, Tomio Hirata, and Yasuyoshi Inagaki.
\newblock Corrigendum to ``{A}n incremental algorithm for constructing shortest
  watchman routes''.
\newblock \emph{International Journal of Computational Geometry and
  Applications}, 9\penalty0 (3):\penalty0 319--323, 1999.

\bibitem[T\'oth et~al.(2017)T\'oth, O'Rourke, and Goodman]{handbookgeom}
Csaba~D. T\'oth, Joseph O'Rourke, and Jacob~E. Goodman.
\newblock \emph{Handbook of Discrete and Computational Geometry, Third
  Edition}.
\newblock {CRC} Press, 2017.

\end{thebibliography}
\bibliographystyle{plainnat}

\end{document}